\newtheorem{fact}{Fact}[section]
\newtheorem{theorem}{Theorem}[section]
\newtheorem{lemma}[theorem]{Lemma}
\newtheorem{definition}[theorem]{Definition}
\newtheorem{observation}[theorem]{Observation}
\newtheorem{proposition}[theorem]{Proposition}
\newtheorem{example}[theorem]{Example}
\newtheorem{remark}[theorem]{Remark}
\newcommand{\floor}[1]{\left\lfloor #1 \right\rfloor}
\newcommand{\ceil}[1]{\left\lceil #1 \right\rceil}
\newcommand{\Oh}{\mathcal{O}}
\newcommand{\poly}{\mathrm{poly}}
\newcommand{\prob}[1]{\mathrm{Pr} [#1]}
\newcommand{\eps}{\varepsilon}
\newcommand{\heavy}[1]{\mathcal{H} \left(#1\right)}
\newcommand{\MwDR}{\textsc{Mismatch with Error Correcting}\xspace}
\newcommand{\Mis}{\textsc{Mismatch}\xspace}
\newcommand{\EPM}{\textsc{Pattern Matching}\xspace}
\newcommand{\WPM}{\textsc{Weighted Pattern Matching}\xspace}
\newcommand{\SWP}{\textsc{Sliding Window Product}\xspace}
\newcommand{\DM}{\textsc{Multiple Pattern Matching}\xspace}
\newcommand{\T}{\mathcal{T}}
\newcommand{\PP}{\mathcal{P}}
\renewcommand{\S}{\mathcal{S}}
\begin{document}
  \title{
    Streaming $k$-mismatch with error correcting and applications
  }

  \author[1]{Jakub Radoszewski}
  \author[2]{Tatiana Starikovskaya}

\affil[1]{
	Institute of Informatics, University of Warsaw, Warsaw, Poland\\
	\texttt{jrad@mimuw.edu.pl}}

\affil[2]{
  DIENS, \'{E}cole normale sup\'{e}rieure, PSL Research University, Paris, 75005, France\\
  \texttt{tat.starikovskaya@gmail.com}
}

\date{\vspace{-.8cm}}

  \maketitle              

\begin{abstract}
We present a new streaming algorithm for the $k$-\textsc{Mismatch} problem, one of the most basic problems in pattern matching. Given a pattern and a text, the task is to find all substrings of the text that are at the Hamming distance at most $k$ from the pattern. Our algorithm is enhanced with an important new feature called \textsc{Error Correcting}, and its complexities for $k=1$ and for a general $k$ are comparable to those of the solutions for the $k$-\textsc{Mismatch} problem by Porat and Porat (FOCS 2009) and Clifford et al.\ (SODA 2016). In parallel to our research, a yet more efficient algorithm for the $k$-\textsc{Mismatch} problem with the \textsc{Error Correcting} feature was developed by Clifford et al.\ (SODA 2019). Using the new feature and recent work on  streaming \textsc{Multiple Pattern Matching} we develop a series of streaming algorithms for pattern matching on weighted strings, which are a commonly used representation of uncertain sequences in molecular biology. We also show that these algorithms are space-optimal up to polylog factors.

A preliminary version of this work was published at DCC 2017 conference~\cite{DBLP:conf/dcc/RadoszewskiS17}.
\end{abstract}

\section{Introduction}
\label{sec:intro}
	In this work we design efficient streaming algorithms for a number of problems of approximate pattern matching. In this class of problems we are given a pattern and a text and wish to find all substrings of the text that are ``similar'' to the pattern. We assume that the text arrives as a stream, one symbol at a time. 

The first small-space streaming algorithms for pattern matching were suggested in the pioneering paper by Porat and Porat in FOCS 2009~\cite{Porat:09}. In particular, they showed an algorithm for the \EPM problem, where one must find all exact occurrences of the pattern in the text. For a pattern of length $m$ their algorithm takes only $\Oh(\log m)$ space and $\Oh(\log m)$ time per each symbol of the text, and reports all exact occurrences of the pattern in the text as they occur. In 2010, Erg{\"{u}}n et al.\ \cite{PeriodicityInStreams} presented a slightly simpler version of the algorithm, and finally in 2011 the running time of the algorithm was improved to constant by Breslauer and Galil~\cite{DBLP:journals/talg/BreslauerG14}. All of the aforementioned results are randomised Monte Carlo algorithms.

The next logical step was to study the complexity of approximate pattern matching in the streaming model in which we are to find all substrings of the text that are at a small distance from the pattern. The most popular distances are the Hamming distance, $L_1$, $L_2$, $L_{\infty}$-distances, and the edit distance. Unfortunately, the general task of computing these distances for each alignment of the pattern and of the text precisely requires at least $\Omega(m)$ space. This lower bound holds even for randomised algorithms~\cite{CJPS:2012}.

However, this is not the case if we allow approximation or if it is sufficient to compute the exact distances only when they are small. Here we focus on approximate pattern matching under the Hamming distance. Let $m$ be the length of the pattern and $n$ be the length of the text. If we are interested in computing a $(1+\eps)$-approximation of the Hamming distance for all alignments of the pattern and of the text, then this can be done in $\Oh(\eps^{-5} \sqrt{m} \log^{4} m)$ space and $\Oh(\eps^{-4} \log^3 {m})$ time per arriving symbol~\cite{CS:16}.
If we are only interested in computing the Hamming distances at the alignments where they do not exceed a given threshold $k$, which we call the $k$-\Mis problem, then Porat and Porat~\cite{Porat:09} showed a randomised streaming algorithm that solves this problem in $\Oh(k^3\log^7 m/\log\log m)$ space and $\Oh(k^2\log^5 m/\log\log m)$ time per arriving symbol; they also presented an algorithm for a special case of $k=1$ with $\Oh(\log^4 m / \log \log m)$ space and $\Oh(\log^3 m / \log \log m)$ time per arriving symbol. Their result was improved (in terms of the dependency on $k$) to $\Oh(k^2 \log^{11} m/\log \log m)$ space and $\Oh(\sqrt{k} \log k + \log^5 m)$ time per arriving symbol by Clifford et al.~\cite{k-mismatch}.\footnote{The logarithmic factors are hidden in~\cite{k-mismatch}, but can be restored easily from~\cite{Porat:09,k-mismatch}.} The error probability of all these solutions is at most $1/\poly(m)$.

Our first contribution is a new streaming algorithm for the $k$-\Mis problem. The crucial feature of our algorithm is that, for each alignment where the Hamming distance is at most~$k$, it can also output the differences of symbols of the pattern and of the text in the mismatching positions. This is particularly surprising as we are not allowed to store a copy of the pattern or of the text in the streaming setting. The $k$-\Mis problem extended with computing this additional characteristic is called here the $k$-\MwDR problem. We first develop a solution for $k=1$.

\begin{theorem}\label{th:1-mismatch}
$1$-\MwDR can be solved in $\Oh(\log^5 m / \log \log m)$ space and $\Oh(\log^5 m / \log \log m)$ time per arrival. The probability of error is at most~$1/\poly(n)$. 
\end{theorem}

As a corollary we obtain a $k$-\MwDR algorithm for arbitrary $k$ via an existing randomised reduction to the case of $k=1$~\cite{Porat:09,k-mismatch}.

\begin{theorem}\label{th:k-mismatch}
$k$-\MwDR can be solved in $\Oh(k^2 \log^{10} m/ \log \log m)$ space and $\Oh(k \log^8 m / \log \log m)$ time per arrival. The probability of error is at most~$1/\poly(m)$. 
\end{theorem}

A comparison of the complexities of our algorithm and the previous algorithms can be found in Table~\ref{tab:k-mis}. Since in the $k$-\MwDR problem we need at least $\Omega(k)$ time per arrival to list the symbol differences, the dependency of the running time of our algorithm on $k$ is optimal. The time complexity of our $k$-\MwDR algorithm is better than the time complexity of the $k$-\Mis algorithm of~\cite{Porat:09} and worse than that of~\cite{k-mismatch}. Our algorithm also uses less space than the $k$-\Mis algorithm of \cite{Porat:09} (in terms of $k$) and the $k$-\Mis algorithm of \cite{k-mismatch}. For the former, it is explained by the fact that we use a more efficient reduction. For the latter, it is because we do not need the involved time-saving techniques of Clifford et al.~\cite{k-mismatch} (as the running time of our algorithm is already almost optimal). We note that a similar problem was considered previously by Porat and Lipsky~\cite{kmis-errorcorrecting}. They introduced a small-space representation of a stream called a ``sketch'' that can be efficiently maintained under a symbol change and can be used to compute the Hamming distance between two streams as well as to correct the errors between them. Unfortunately, it is not clear whether their sketches can be efficiently maintained over a sliding window (i.e.\ for substrings of the text) and therefore we cannot use them in our model. 

\begin{table}[htpb]
  \begin{center}
  \small
    \begin{tabular}{|c|c|c|p{0.8cm}|}
      \hline
      Algorithm & Space & Time per symbol & \textsc{Err. Corr.} \\\hline\hline
      Porat and Porat~\cite{Porat:09} & $\Oh(k^3\log^7 m/\log\log m)$ & $\Oh(k^2\log^5 m/\log\log m)$ & No \\\hline
      Clifford et al.~\cite{k-mismatch} & $\Oh(k^2 \log^{11} m/\log \log m)$ & $\Oh(\sqrt{k} \log k + \log^5 m)$ & No \\\hline
      This paper & $\Oh(k^2 \log^{10} m/ \log \log m)$ & $\Oh(k \log^8 m / \log \log m)$ & Yes \\\hline
      Clifford et al.~\cite{DBLP:conf/soda/CliffordKP19} & $\Oh(k \log\frac{n}{k})$ & $\Oh(k+\log \frac{n}{k}(\sqrt{k \log k}+\log^3 n))$ & Yes \\\hline
    \end{tabular}
  \end{center}
  \caption{
    Comparison of the previous solutions to the $k$-\Mis problem with our solution to the $k$-\MwDR problem and a parallel contribution by Clifford et al.~\cite{DBLP:conf/soda/CliffordKP19}.
      \label{tab:k-mis}
  }
\end{table}

In parallel to our research, Clifford et al.~\cite{DBLP:conf/soda/CliffordKP19} presented a new streaming algorithm for the $k$-\MwDR problem with space complexity $\Oh(k \log\frac{n}{k})$ which spends $\Oh(k+\log \frac{n}{k}(\sqrt{k \log k}+\log^3 n))$ time per text symbol. Their algorithm introduces a new, much more elaborate way to sketch the pattern and the text. In particular, in contrast to the approach used by Porat and Porat~\cite{Porat:09} and Clifford et al.~\cite{k-mismatch} their algorithm does not reduce the $k$-\Mis problem to the $1$-\Mis problem, but solves the problem immediately for an arbitrary $k$. 

The \textsc{Error Correcting} feature is a powerful tool. We demonstrate this by using it to develop efficient streaming algorithms for the problem of pattern matching on weighted strings.
A \emph{weighted string} (also known as weighted sequence, position probability matrix or position weight matrix, PWM) is a sequence of probability distributions on the alphabet. Weighted strings are a commonly used representation of uncertain sequences in molecular biology. In particular, they are used to model motifs and have become an important part of many software tools for computational motif discovery; see e.g.\ \cite{JASPAR,Xia}. In the \WPM problem we are given a text and a pattern, both of which can be either weighted or regular strings. If either only the text or only the pattern are weighted, the task is to find all alignments of the text and of the pattern where they match with probability above a given threshold $1/z$. In the most general case, when both the pattern and the text is weighted, we must find all alignments of the text and of the pattern where there exists a regular string that matches both the text and the pattern with probability at least $1/z$. We assume here that the distributions at the respective positions are independent.

As previously, let $m$ be the length of the pattern, $n$ be the length of the text and assume a constant-sized alphabet. We are the first to consider the \WPM problem in the streaming setting. In the offline setting the most commonly studied variant, when the text is a weighted string and the pattern is a regular string, can be solved in $\Oh(n \log n)$ time via the Fast Fourier Transform~\cite{KCL_publication} or in $\Oh(n \log z)$ time using the suffix array and lookahead scoring \cite{Kociumaka2018}. This variant has been also considered in the indexing setting, in which we are to preprocess a weighted text to be able to answer queries for multiple patterns; see~\cite{amir_weighted_property_matching_j,DBLP:conf/cpm/BartonKPR16,DBLP:conf/edbt/BiswasPTS16,costas_weighted_suffix_tree_j}. The symmetric variant of the \WPM problem, when only the pattern is weighted, is closely related to the problem of profile matching \cite{DBLP:journals/tcs/PizziU08} and admits $\Oh(n \log n)$-time and $\Oh(n \log z)$-time solutions as well. Finally, the variant when both the text and the pattern are weighted was introduced in \cite{DBLP:journals/tcs/BartonLP16}, where an $\Oh(n z^2 \log z)$-time solution was presented. Later a more efficient $\Oh(n \sqrt{z} \log \log z)$-time solution was devised in \cite{Kociumaka2018}. The offline algorithms use $\Omega(m)$ space and the best indexing solution uses $\Omega(nz)$ space. A problem of computing Hamming and edit distances for weighted strings has been also considered~\cite{Amir2006}.


We consider each of the three variants of the \WPM problem.  If $z \ge m$, the offline algorithms listed above~\cite{Kociumaka2018,KCL_publication} and the black box transformation~\cite{blackboxonline} give $O(m)$-space on-line algorithms. The time complexities are $\Oh(\log^2 m)$ for the variants where only the pattern or only the text are weighted, and $\Oh(\sqrt{z} \log m \log \log z)$ for the variant when both the pattern and the text are weighted. Below we assume $z \le m$. 
For the two variants of \WPM where the text is weighted, our solutions are approximate. Namely, at each alignment we output either ``Yes'' or ``No''. If the pattern matches the fragment of the text, we output ``Yes''. If the match probability is between $(1-\eps)/z$ and $1/z$, we output either ``Yes'' or ``No'', and otherwise we output ``No''. If we output ``Yes'', we also output a $(1-\eps)$-approximation of the match probability between $P$ and $T$.

We show two series of streaming algorithms for the \WPM problem. The first one is based on the $k$-\MwDR problem. Let $\S_{\log z}$, $\T_{\log z}$, and $\PP_{\log z}$ be the space and the time complexities and error probability for the $k$-\MwDR for $k = \log z$, a pattern of length $m$, and a text of length $n$. With Theorem~\ref{th:k-mismatch} we obtain space $\S_{\log z} = \Oh(\log^2 z \cdot \log^{10} m / \log \log m)$, time $\T_{\log z} = \Oh(\log z \cdot \log^8 m / \log \log m)$, and error probability $\PP_{\log z}=1/\poly(m)$ (the new work of Clifford et al.~\cite{DBLP:conf/soda/CliffordKP19} yields space $\S_{\log z}=\Oh(\log z \log m)$, time $\T_{\log z}=\Oh(\log z + \log m(\sqrt{\log z \log \log z}+\log^3 m))$, and error probability $\PP_{\log z}=1/\poly(n)$). 

\begin{theorem}\label{th:WPM-kmismatch}
Assume that $z \le m$. If only the pattern is weighted, there is a streaming algorithm that solves the \WPM problem in $\Oh(z)+ \S_{\log z}$ space and $\Oh(\log^2 z)+\T_{\log z}$ time per arrival. If only the text is weighted, the problem can be solved $(1-\eps)$-approximately in $\Oh(z \log_{1/(1-\eps)} z)+ \S_{\log z}$ space and $\Oh(z \log_{1/(1-\eps)} z) + \T_{\log z}$ time.
At each arrival, the algorithms can err with probability~$\PP_{\log z}$. 
\end{theorem}

Second, we show three streaming algorithms for the \WPM problem that are based on the recent breakthroughs for the streaming \DM problem~\cite{DBLP:conf/esa/CliffordFPSS15, DBLP:conf/esa/GolanP17, DBLP:conf/icalp/GolanKP18}, the last two carried out in parallel with our research. In the \DM problem we are given a set of strings (a \emph{dictionary}) and a set of texts that arrive in a streaming fashion. When a new symbol of a text arrives, the algorithm must decide if the current text ends with an occurrence of a string in the dictionary. In the weighted-pattern-regular-text version of \WPM, we use a single-text-stream version of \DM for which the current best algorithm is by Golan and Porat~\cite{DBLP:conf/icalp/GolanKP18}, which for a dictionary $D$ of $m$-length strings takes $\Oh(|D| \log m)$ space and $\Oh(1)$ time per symbol. In the weighted-text-regular-pattern case we apply the \EPM algorithm of Breslauer and Galil~\cite{DBLP:journals/talg/BreslauerG14}. Finally, in the both-weighted case we apply the most general version of \DM for which the algorithm of Golan et al.~\cite{DBLP:conf/icalp/GolanKP18} uses $\Oh(|D| \log m)$ shared memory, $\Oh(\log |D| \log m)$ space per stream, and $\Oh(\log m)$ time per each arriving character of the text.

\begin{theorem}\label{th:WPM-dictionary}
Assume that $z \le m$. If only the pattern is weighted, there is a streaming algorithm that solves the \WPM problem in $\Oh(z \log m)$ space and $\Oh(1)$ time per arrival. If only the text is weighted, the problem can be solved $(1-\eps)$-approximately in $\Oh(z (\log_{1/(1-\eps)} z +\log m))$ space and $\Oh(z \log_{1/(1-\eps)} z)$ time. Finally, when both the pattern and the text are weighted, there is a $(1-\eps)$-approximation streaming algorithm with space complexity $\Oh(z(\log_{1/(1-\eps)} z + \log z \log m))$ that uses $\Oh(z (\log_{1/(1-\eps)} z + \log z\log m))$ time per arrival. At each arrival, the algorithms can err with probability~$1/\poly(n)$. 
\end{theorem}

Finally, in Section~\ref{sec:WPT-lb} we show that the space complexity of our algorithms is almost optimal.

\begin{proposition}\label{prp:WPT-lb}
If $z \le m$, then any streaming algorithm, exact or $(1-\eps)$-approximate, solving one of the three variants of the \WPM problem must use $\Omega(z)$ space. 
\end{proposition}

This is an extended version of a paper that was published at DCC 2017 conference~\cite{DBLP:conf/dcc/RadoszewskiS17}.

\paragraph{Model of computation} We assume that we receive the pattern first and can preprocess it by reading it in a streaming fashion several times. After having preprocessed the pattern we receive the text that arrives as a stream, one symbol at a time. We account for all the space that is used after the preprocessing and cannot afford to store a copy of the text or of the pattern. The indicated error probabilities are per arrival. We assume a constant-sized alphabet $\Sigma$. A symbol of a weighted string is a vector of probabilities of the letters in which all entries are of the form $c^{p/2^{dw}}$, where $w$ is the machine word size, $c$ and $d$ are constants, and $p$ is an integer that fits in a constant number of machine words (log-probability model). Additionally, the probability 0 has a special representation. The only operations on probabilities in our algorithms are multiplications and divisions, which can be performed exactly in $\Oh(1)$ time in this model.

\section{$k$-Mismatch with Error Correcting}
\label{sec:k-mismatch}
	In this section we give our solution to the $k$-\MwDR problem for $k=1$ (Theorem~\ref{th:1-mismatch}) and for a general value of $k$ (Theorem~\ref{th:k-mismatch}).

\subsection{Proof of Theorem~\ref{th:1-mismatch}: $k = 1$}
Let us start with a brief overview of the algorithm. Assume that letters of a string are numbered starting from 1. We reduce the $1$-\MwDR problem to $\Oh(\log m)$ instances of a special case of this problem where the mismatch is required to belong to the \emph{second half} of the pattern. More formally, consider $\ceil{\log m}+1$ partitions $P = P_i S_i$, where $P_i$ is a prefix of length $\min\{2^i, m\}$ and $S_i$ is the remaining suffix, for $i = 0, 1, \ldots, \ceil{\log m}$. We say that a substring of $T$ is a \emph{right-half $1$-mismatch occurrence} of $P_i$ if either $i = 0$ and $P_0$ does not match, or $i \ge 1$ and the mismatch is at position $j > 2^{i-1}$ in $P_i$.

\begin{observation}~\label{obs:special}
Any $1$-mismatch occurrence of $P$ in $T$ is a right-half $1$-mismatch occurrence of some $P_i$ followed by an exact occurrence of $S_i$.
\end{observation}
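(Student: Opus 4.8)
The plan is to localize the single mismatch and then read off the correct index $i$ directly from its position. Since we are given a $1$-mismatch occurrence, the length-$m$ substring $T[\ell,\ell+m-1]$ aligned with $P$ differs from $P$ in exactly one position; call it $j\in\{1,\ldots,m\}$. The alignment of $P=P_iS_i$ against $T$ splits $T[\ell,\ell+m-1]$ into a prefix aligned with $P_i$ and a suffix aligned with $S_i$, so the whole claim reduces to producing one index $i$ for which the unique mismatch $j$ lies in the right half of $P_i$; the exact occurrence of $S_i$ will then come for free, since the mismatch cannot also lie in $S_i$.

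First I would set $i=\ceil{\log j}$ (with $\log=\log_2$, as in the definition of the partitions) and verify that this is exactly the index needed. For $j=1$ this gives $i=0$, so $P_0$ is the single symbol $P[1]$, the mismatch is precisely at this symbol, and by definition this is a right-half $1$-mismatch occurrence of $P_0$. For $j\ge 2$ we have $i\ge 1$, and the defining inequality of the ceiling, namely $2^{i-1}<j\le 2^i$, states exactly that the mismatch sits at a position $>2^{i-1}$ (the right half) yet still inside $P_i$: from $j\le 2^i$ and $j\le m$ we get $j\le\min\{2^i,m\}=|P_i|$. In either case the prefix part of the alignment is a right-half $1$-mismatch occurrence of $P_i$.

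Next I would confirm that this $i$ is admissible and that $S_i$ matches exactly. Since $j\le m$ we have $i=\ceil{\log j}\le\ceil{\log m}$, so $P_i$ is one of the defined partitions $P_0,\ldots,P_{\ceil{\log m}}$. Because $j\le|P_i|$, the unique mismatch lies inside $P_i$, so none of the positions $|P_i|+1,\ldots,m$ occupied by $S_i$ is a mismatch; hence $S_i$ occurs exactly in $T$ immediately after the right-half $1$-mismatch occurrence of $P_i$. This is precisely the decomposition asserted by the observation.

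There is essentially no hard step here; the statement is really just the remark that, as the prefix lengths double, every position $j$ falls into the right half of exactly one $P_i$, namely the one indexed by $\ceil{\log j}$. The only points requiring care are the degenerate $i=0$ case, where $P_0$ is a single symbol and the right-half condition is stated separately, and the routine equivalence $2^{i-1}<j\le 2^i\iff i=\ceil{\log j}$.
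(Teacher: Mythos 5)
Your proof is correct and is precisely the argument the paper leaves implicit (the observation is stated there without proof, as immediate from the definitions): taking $i=\ceil{\log j}$ for the unique mismatch position $j$ places the mismatch in the right half of $P_i$ (with the $j=1$, $i=0$ case handled by the separate clause of the definition), and the exact occurrence of $S_i$ follows because the single mismatch lies within $P_i$. Your verification that $i\le\ceil{\log m}$ and $j\le\min\{2^i,m\}=|P_i|$ covers the edge cases cleanly, so nothing is missing.
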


We run two parallel processes for each $i$. The first process searches for right-half $1$-mismatch occurrences of $P_i$. After having found a right-half $1$-mismatch occurrence, it passes the information about it to the second process that decides if it is followed by an exact occurrence of $S_i$. 

\subsubsection{Preliminaries}

\paragraph{Rabin--Karp fingerprints and 1-mismatch sketches}
The Rabin--Karp fingerprint of a string $X=X[1] \ldots X[\ell]$ is defined as $\phi(X) = (\sum_{i=1}^{\ell} X[i] \cdot r^i)  \bmod p$, where $p$ is a prime and $r$ is a random integer in $\mathbb{F}_p$. If we choose $p$ to be large enough, then the collision probability for any two $\ell$-length strings $X$ and $Y$, where $\ell \le m$, will be at most $1/\poly(m)$~\cite{KR:1987}. We will also need the following fact which follows immediately from the definition.

\begin{fact}\label{fct:addsubphi}
Let $X, Y$ be two strings and $Z = XY$ be their concatenation. Assuming that together with a Rabin--Karp fingerprint of a string of length $\ell$ we store $r^{\ell} \bmod p$ and $r^{-\ell} \bmod p$, from the Rabin--Karp fingerprints of two of the strings $X$, $Y$, $Z$ one can compute the Rabin--Karp fingerprint of the third string in $\Oh(1)$ time using the formula:
$$\phi(Z) = (\phi (X) + r^{|X|} \cdot \phi(Y)) \bmod p.$$
\end{fact}

We now introduce a notion of $1$-mismatch sketches and show its basic properties.

\begin{definition}[$1$-mismatch sketch]
For a prime $q$, the $1$-mismatch sketch of a string $X$ is a vector of length $q$, where the $j$-th element  is the Rabin--Karp fingerprint of the subsequence $X[j] X[j+q] X[j+2q]\ldots$
\end{definition}

This notion was implicitly used by Porat and Porat~\cite{Porat:09} in their $1$-\Mis algorithm.
In particular, they showed the following lemma.

\begin{lemma}[\cite{Porat:09}]\label{lem:Porat}
The Hamming distance between two $m$-length strings $X$ and $Y$ is equal to $1$ if and only if for each prime $q \in [\log m, 2 \log m]$ there is exactly one $j \le q$ such that subsequences $X[j] X[j+q] X[j+2q] \ldots$ and $Y[j] Y[j+q] Y[j+2q] \ldots$ are not equal. Moreover, knowing which subsequences are not equal, we can determine the mismatch position between $X$ and $Y$ in $\Oh(\log m)$ time. 
\end{lemma}

Let us list a few simple properties of 1-mismatch sketches that generalise Fact~\ref{fct:addsubphi}.

\begin{lemma}\label{lm:1-mismatch-sketches}
Let $X, Y$ be two strings and $Z = XY$ be their concatenation. Consider the $1$-mismatch sketches of $X$, $Y$, and $Z$ defined for a prime $q$. Then:
\begin{enumerate}[(i)]
\item\label{it:eq} If $X = Y$, then their $1$-mismatch sketches are equal;
\item\label{it:concat} Given the $1$-mismatch sketches of $X$ and $Y$, we can compute the $1$-mismatch sketch of $Z$ in $\Oh(q)$ time;
\item\label{it:subtract} Given the $1$-mismatch sketches of $X$ and $Z$, we can compute the $1$-mismatch sketch of $Y$ in $\Oh(q)$ time;
\item\label{it:power} Given the $1$-mismatch sketch of $X$, we can compute the $1$-mismatch sketch of $X^\alpha$ in $\Oh(q \log \alpha)$ time, where $\alpha$ is an integer  and $X^\alpha$ is a concatenation of $\alpha$ copies of $X$.
\end{enumerate}
\end{lemma}
\begin{proof}
Property~\eqref{it:eq} is a direct corollary of the definitions of Rabin--Karp fingerprints and $1$-mismatch sketches. As for Property~\eqref{it:concat}, note that we need to compute the Rabin--Karp fingerprints of the at most $q$ concatenations of pairs of strings. Similarly, for Property~\eqref{it:subtract} we only need to compute the Rabin--Karp fingerprints of the at most $q$ strings constructed from $Y$ given the Rabin--Karp fingerprints of the at most $q$ strings constructed from $X$ and their concatenations (in $Z$). Thus, Properties~\eqref{it:concat} and \eqref{it:subtract} follow from Fact~\ref{fct:addsubphi}. Finally, Property~\eqref{it:power} is implied by Property~\eqref{it:concat} as we can compute the $1$-mismatch sketch of a square of any string given its $1$-mismatch sketch in $\Oh(q)$ time.  
\end{proof}

\paragraph{Periodicity in strings} 
For a string $Q$, by $Q[i, j]$ we denote a substring of $Q$ equal to $Q[i]\ldots Q[j]$.
We say that a string $Q$ has a \emph{period} $p$ if $Q[i]=Q[i+p]$ for all $i=1,\ldots,|Q|-p$ (equivalently, if $Q[1,|Q|-p]=Q[p+1,|Q|]$). If $Q$ has a period $p$, then the string $Q[1,p]$ is the corresponding \emph{string period} of $Q$. 

\begin{lemma}[Fine and Wilf's periodicity lemma~\cite{fw:65}]\label{lem:fine_wilf}
If $Q$ has two periods $p$ and $q$ such that $p+q \le |Q|$, then $Q$ also has a period $\gcd(p,q)$. 
\end{lemma}

We will need the following two corollaries of this lemma. First, all the periods of $Q$ not greater than the half of the length of $Q$ are multiplicities of the shortest period of $Q$. And secondly, if $Q$ has at least 3 occurrences in a string $P$ such that $|P| \le 2|Q|$, then all the occurrences of $Q$ in $P$ form an arithmetic progression with the difference equal to the shortest period of $Q$.

\paragraph{Streaming algorithm for \EPM} 
Let us now recall a streaming \EPM algorithm~\cite{Porat:09,PeriodicityInStreams,DBLP:journals/talg/BreslauerG14} for a pattern $Q$ and text $T$ that uses $\Oh(\log |Q|)$ space and takes $\Oh(\log |Q|)$ time per symbol. The algorithm stores $\Oh(\log |Q|)$ levels of positions of $T$. Positions in level $j$ are occurrences of $Q[1,2^j]$ in the suffix of the current text $T$ of length $2^{j+1}$. The algorithm stores the Rabin--Karp fingerprints of strings from the beginning of $T$ up to each of these positions. If there are at least 3 such positions at one level, then, by Lemma~\ref{lem:fine_wilf}, all the positions form a single arithmetic progression whose difference equals the length of the minimal period of $Q[1,2^j]$. This allows to store the aforementioned information very compactly, using only $\Oh(\log |Q|)$ space in total. Finally, the algorithm stores the Rabin--Karp fingerprint of the current text and of all prefixes $Q[1,2^j]$. When a new symbol $T[q]$ arrives, the algorithm considers the leftmost position $\ell_j$ in each level $j$. If $q-\ell_j+1$ is smaller than $2^{j+1}$, the algorithm does nothing. Otherwise if the fingerprints imply that $\ell_j$ is an occurrence of $Q[1,2^{j+1}]$, the algorithm promotes it to the next level, and if $\ell_j$ is not an occurrence of $Q[1,2^{j+1}]$, the algorithm discards it. When a position reaches the top level, it is an occurrence of $Q$ and the algorithm outputs it. We note that the algorithm above is not the fastest known solution, but it gives the desired time bounds and is conceptually simple. For more details, see~\cite{Porat:09,PeriodicityInStreams,DBLP:journals/talg/BreslauerG14}. 

\medskip
We now describe our 1-\MwDR algorithm in detail. 

\subsubsection{Right-half $1$-mismatch occurrences of $P_i$} 
We start by describing the first process that detects right-half $1$-mismatch occurrences of $P_i$. When the process finds a new occurrence $T[q-|P_i|+1, q]$, it also computes the $1$-mismatch sketches of $T[q-|P_i|+1, q]$ and sends them to the second process. 

Suppose a new symbol $T[q]$ has arrived. If $i = 0$, $P_i = P[1]$ is a single letter string. The process compares $T[q]$ and $P[1]$, and if they are not equal, outputs $q$ and $1$-mismatch sketches of $T[q]$. 

Suppose now that $i > 0$. During the preprocessing phase, we compute the $1$-mismatch sketches of $P_i$ for all primes in $[\log m, 2 \log m]$. During the main phase, we run two algorithms for $P_i$ and $T$: the $1$-\Mis algorithm~\cite{Porat:09} and the \EPM algorithm. The $1$-\Mis algorithm~\cite{Porat:09} identifies  $1$-mismatch occurrences of $P_i$ in $T$, and for each of them returns the mismatch position. The algorithm uses $\Oh(\log^4 m / \log \log m)$ space and $\Oh(\log^3 m / \log \log m)$ time per symbol. However, the algorithm does not know the difference of symbols at the mismatch position, and we will need this information in order to compute the $1$-mismatch sketches. To extend the algorithm with the required functionality, we use the \EPM algorithm. If $T[q-|P_i|+1, q]$ is a right-half $1$-mismatch occurrence of $P_i$ for $i>0$, then $P_i[1,2^{i-1}]$ matches at the position $q-|P_i|+1$ exactly. It follows that at time $q$ the position $q-|P_i|+1$ is stored at level $i-1$ of the \EPM algorithm, and we know the Rabin--Karp fingerprints of $T[1, q-|P_i|]$ and $T[1,q]$. Therefore, we can compute the Rabin--Karp fingerprint of $T[q-|P_i|+1,q]$ in $\Oh(1)$ time using Fact~\ref{fct:addsubphi}. Let $j$ be the mismatch position and $\phi (P_i)$ and  $\phi (T[q-|P_i|+1,q])$ be the Rabin--Karp fingerprints of $P_i$ and $T[q-|P_i|+1,q]$, respectively. We use the fingerprints to compute the difference of symbols of $P_i$ and $T[q-|P_i|+1,q]$ at position $j$.

\begin{lemma}\label{lem:DataRecovery}
Assume that $X$ and $Y$ are two strings of length $m$ that differ only at position $j$. Knowing the Rabin--Karp fingerprints of $X$ and $Y$, we can compute $X[j]-Y[j]$ in $\Oh(\log m)$ time.
\end{lemma}
\begin{proof}
Let $\phi(X)$ and $\phi(Y)$ be the Rabin--Karp fingerprints of $X$ and $Y$, respectively. Then $X[j]-Y[j]$ is equal to $(\phi(X) - \phi(Y)) \cdot r^{-j} \pmod p$, where $p$ and $r$ are the integers used in the definition of Rabin--Karp fingerprints. Finally, $r^{-j} \pmod p$ can be computed in $\Oh(\log m)$ time.
\end{proof}

Now that we know the mismatch position and the letter difference, we can compute the $1$-mismatch sketches of $T[q-|P_i|+1,q]$ in $\Oh(\log^2 m / \log \log m)$  time from the 1-mismatch sketches of $P_i$.

\subsubsection{Exact occurrences of $S_i$} 
If $i=\ceil{\log m}$, $S_i$ is the empty string and therefore the second process is not necessary. Henceforth we assume $0 \le i < \ceil{\log m}$. The second process is built on top of the \EPM algorithm for the pattern $Q = S_i$ and $T$. Since for each new position $q$ the first process tells whether it is preceded by a right-half $1$-mismatch occurrence of $P_i$, all we need is to carry this information from the level $0$ of the \EPM algorithm to the top level. We claim that it suffices to store the $1$-mismatch sketches for a constant number of positions in each level. Using them, we will be able to infer the remaining unstored information.

Consider level $j$. Let us assume that the algorithm has read $T[1,q]$ so far. The progression that the \EPM algorithm currently stores for this level can be a part of a longer progression of occurrences of $S_i[1,2^j]$ in~$T$. More precisely, we consider the maximal arithmetic progression $R_j$ of occurrences of $S_i[1,2^j]$ in $T[1,q]$ with difference $\rho_{ij}$ being the shortest period of $S_i[1,2^j]$. Then, if the \EPM algorithm stores at least three occurrences, they form a suffix of the progression $R_j$. Otherwise there are at most two occurrences stored, so only the first occurrence may not belong to $R_j$ but to a previous such maximal progression.

Let $\ell$ be some occurrence of $S_i[1,2^j]$ for which we would like to figure out whether it is preceded by a right-half $1$-mismatch occurrence of $P_i$. If $\ell$ is far from the start of the current progression $R_j$, then the text preceding $\ell$ is periodic with period $\rho_{ij}$ and we can use this fact to infer the $1$-mismatch information. So our main concern is the positions $\ell$ that are at the distance of at most $|P_i| = 2^i$ from the start of the progression $R_j$. We define four positions $\ell_j^a$, $a = 1,2,3,4$, relative to the progression $R_j$ that help us to restore the information in this case. Note that we can easily determine the moment when a new progression $R_j$ starts, as this is precisely the moment when the difference between two consecutive positions in level $j$ becomes greater than $\rho_{ij}$. We define $\ell_j^1$ as the first position preceded by a right-half $1$-mismatch occurrence of $P_i$ that was added to level $j$ since that moment. We further define $\ell_j^a$ as the leftmost terms in $R_j$ preceded by a right-half $1$-mismatch occurrence of $P_i$ in $[\ell_j^1+ (a-1)\cdot 2^{i-2}, \ell_j^1+a \cdot 2^{i-2}]$ for $a = 2,3,4$. (If such terms exist; otherwise they are left undefined). A schematic view is given in Fig.~\ref{fig:ell_1234}. The algorithm stores $1$-mismatch sketches of $T[1, \ell_j^a-2^i-1]$ and $T[1, \ell_j^a-1]$ for each $a=1,2,3,4$.

\begin{figure}[htpb]
\begin{center}
\begin{tikzpicture}[scale=0.4]
\draw[thick] (-2,0)--(28,0);
\draw (-1,0) node[above] {$T$};
\foreach \x in {1,3,...,27}{\draw[thick,xshift=\x cm] (0,-0.1) -- (0,0.1);}
\draw[latex-latex] (1,2) -- node[above] {$\rho_{ij}$} (3,2);
\foreach \x/\c in {1/N,3/Y,5/Y,7/N,9/N,11/Y,13/Y,15/Y,17/Y,19/N,21/N,23/N,25/Y,27/N}{\draw[xshift=-0.35cm,yshift=-0.35cm] (\x,0) node {\scriptsize \c};}
\draw (3,1) -- (23,1);
\foreach \x in {3,8,...,23}{\draw[xshift=\x cm,yshift=1cm] (0,-0.1) -- (0,0.1);}
\foreach \x in {5.5,10.5,...,20.5}{\draw[xshift=\x cm,yshift=1cm] (0,0) node[above] {$2^{i-2}$};}
\draw (3,-2) node (A) {$\ell_j^1$};
\draw[-latex,densely dotted] (A) -- (3,-0.2);
\draw (11,-2) node (B) {$\ell_j^2$};
\draw[-latex,densely dotted] (B) -- (11,-0.2);
\draw (13,-2) node (C) {$\ell_j^3$};
\draw[-latex,densely dotted] (C) -- (13,-0.2);
\end{tikzpicture}
\end{center}
\caption{A progression $R_j$; the letter Y represents a $1$-mismatch occurrence of $P_i$ preceding a given position from $R_j$ and the letter N means there is no such occurrence. The special positions $\ell_j^1,\ell_j^2,\ell_j^3$ are marked whereas $\ell_j^4$ is undefined.}
\label{fig:ell_1234}
\end{figure}
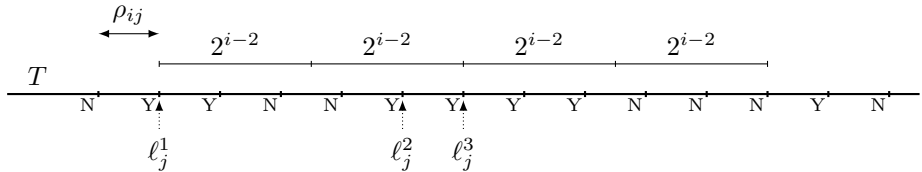

Let us mention that when the first element of a new progression $R_j$ arrives, the algorithm stores all the data for the previous progression $R_j$ as well. This accounts for the special case of just two occurrences of $S_i[1,2^j]$ in the suffix of $T[1,q]$ belonging to two different progressions. Afterwards the data for the previous progression can be safely discarded.

Besides, the algorithm stores a number of $1$-mismatch sketches for the pattern, which are computed during the preprocessing step and in total occupy $\Oh(\log^3 m / \log \log m)$ space (for a fixed $i$). First, it stores the $1$-mismatch sketches of the shortest string period $P_i[1,\gamma_i]$ of $P_i[1,2^{i-1}]$. Second, it stores the $1$-mismatch sketches of the shortest string period $S_i[1, \rho_{ij}]$ of $S_i [1, 2^j]$ for each $j = 0,1, \ldots, \floor{\log |S_i|}$. Finally, the algorithm stores the $1$-mismatch sketches of $S_i [1,\delta_{ij}]$, where $\delta_{ij}$ is the remainder of $-2^i$ modulo $\rho_{ij}$. To compute the periods and the sketches during the preprocessing step we make three passes over the pattern. Over the first two passes we compute the minimal periods of $P_i[1,2^{i-1}]$ and $S_i [1, 2^j]$ for all $i$ and $j$ using $\Oh(\log^2 m)$ instances of the streaming algorithm~\cite{PeriodicityInStreams}. The algorithm requires $O(\log^2 m)$ space and computes the minimal period in one pass if the period is smaller than $m/2$, and in two passes if otherwise. During the third pass we compute the $1$-mismatch sketches.

\begin{lemma}\label{lm:occ_verification}
The algorithm can decide if a position $\ell \in R_j$ is preceded by a right-half $1$-mismatch occurrence of $P_i$ in $\Oh (\log^3 m / \log \log m)$ time and, if this is the case, it can also compute the $1$-mismatch sketches of $T[1,\ell-2^i-1]$ and $T[1,\ell-1]$.
\end{lemma}
\begin{proof}
We consider three cases based on the relationship between $\ell$ and the positions $\ell_j^a$.

\paragraph{Case 1: $\ell < \ell_j^1$ or $\ell_j^a+2^{i-2} \le \ell < \ell_j^{a+1}$ for some $a \in \{1,2,3\}$} In this case $\ell$ cannot be preceded by a $1$-mismatch occurrence of $P_i$ by definition. If $\ell_j^{a+1}$ is undefined, as the upper bound on $\ell$ we take $\ell_j^b$ for the smallest $b>a+1$ that is defined or, if no such $b>a+1$ exists, $\ell_j^1+2^i$.

\paragraph{Case 2: $\ell_j^a \le \ell < \ell_j^a + 2^{i-2}$ for some $a \in \{1,2,3,4\}$} Recall that $\gamma_i$ is the minimal period of $P_i[1,2^{i-1}]$. Let us first show that if $\ell$ is preceded by a right-half $1$-mismatch occurrence of $P_i$, then it is also preceded by a $1$-mismatch occurrence of $(P_i[1,\gamma_i])^\alpha P_i$, where $\alpha = (\ell - \ell_j^a) / \gamma_i$. To this end, recall that $\ell_j^a$ is preceded by a right-half $1$-mismatch occurrence of $P_i$. If the same property holds for $\ell$, then $T[\ell_j^a-2^i, \ell_j^a-2^{i-1}-1]$ and $T[\ell-2^i, \ell-2^{i-1}-1]$ are two occurrences of $P_i[1,2^{i-1}]$ that overlap by at least $2^{i-2}$ positions; see Fig.~\ref{fig:case2}. Therefore $T[\ell_j^a-2^i, \ell-2^i-1]$ is a power of the minimal period of $P_i[1,\gamma_i]$ by Lemma~\ref{lem:fine_wilf}.

Below we explain how to compute the $1$-mismatch sketches of $T[\ell_j^a-2^i, \ell-1]$ and of $(P_i[1,\gamma_i])^\alpha P_i$ in $\Oh (\log^3 m / \log \log m)$ time. The algorithm can then use Lemma~\ref{lem:Porat} to determine in $\Oh (\log^2 m / \log \log m)$ time whether $\ell$ is preceded by a $1$-mismatch occurrence of $(P_i[1,\gamma_i])^\alpha P_i$ and, if so, to determine the mismatch position. Hence, it can check whether $\ell$ is preceded by a right-half $1$-mismatch occurrence of $P_i$. Moreover, if this is the case, the algorithm can compute the $1$-mismatch sketch of $T[1,\ell-2^i-1]$ using sketches of $T[1, \ell_j^a-2^i-1]$ and $(P_i[1,\gamma_i])^\alpha$.

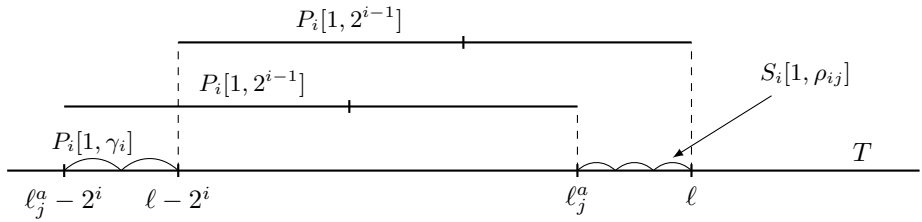
\begin{figure}[htpb]
\begin{center}
\begin{tikzpicture}[xscale=0.75,yscale=0.85]
\draw[thick] (0,0)--(16,0);
\draw[thick] (1,-0.1)--(1,0.1);
\draw[thick] (3,-0.1)--(3,0.1);
\draw[thick] (10,-0.1)--(10,0.1);
\draw[thick] (12,-0.1)--(12,0.1);
\node[below] at (1,-0.1) {$\ell_j^a-2^i$};
\node[below] at (3,-0.1) {$\ell-2^i$};
\node[below] at (10,-0.1) {$\ell_j^a$};
\node[below] at (12,-0.1) {$\ell$};
\draw (1,0) to[out=40, in=140] (2,0);
\draw (1.5,0.1) node[above] {\small $P_i[1,\gamma_i]$};
\draw (2,0) to[out=40, in=140] (3,0);
\draw[thick] (1,1)--(10,1);
\draw[dashed] (10,0)--(10,1);
\draw[thick] (6,0.9)--(6,1.1);
\node[above] at (4.3,1) {\small $P_i[1,2^{i-1}]$};
\draw[thick] (3,2)--(12,2);
\draw[dashed] (3,0)--(3,2);
\draw[dashed] (12,0)--(12,2);
\draw[thick] (8,1.9)--(8,2.1);
\node[above] at (6,2) {\small $P_i[1,2^{i-1}]$};
\draw (15,0) node[above] {$T$};
\draw (10,0) to[out=40, in=140] (10.666666,0);
\draw (10.666666,0) to[out=40, in=140] (11.333333,0);
\draw (11.333333,0) to[out=40, in=140] (12,0);
\draw (14,1.5) node (X) {\small $S_i[1,\rho_{ij}]$};
\draw[-latex] (X) -- (11.66666,0.2);
\end{tikzpicture}
\end{center}
\caption{Case 2 of Lemma~\ref{lm:occ_verification}. Position $\ell$ is close to one of the positions~$\ell_j^a$, which implies that $T[\ell_j^a-2^i, \ell-2^i-1]$ is a power of $P_i[1,\gamma_i]$.}
\label{fig:case2}
\end{figure}

Recall that the algorithm stores the $1$-mismatch sketches of $T[1,\ell_j^a-1]$. The algorithm can compute the $1$-mismatch sketches of $T[1,\ell-1]$ using the $1$-mismatch sketches of $T[1,\ell_j^a-1]$ and $S_i[1,\rho_{ij}]$ in $\Oh(\log^3 m / \log \log m)$ time via Lemma~\ref{lm:1-mismatch-sketches}\eqref{it:power} and Lemma~\ref{lm:1-mismatch-sketches}\eqref{it:concat}. Finally, the algorithm computes the $1$-mismatch sketches of $T[\ell_j^a-2^i, \ell-1]$ in $\Oh (\log^2 m / \log \log m)$ time by applying Lemma~\ref{lm:1-mismatch-sketches}\eqref{it:subtract} for the $1$-mismatch sketches of $T[1,\ell_j^a-2^i-1]$ and $T[1,\ell-1]$. The $1$-mismatch sketches of $(P_i[1,\gamma_i])^\alpha P_i$ are computed from sketches of $P_i[1,\gamma_i]$ and $P_i$ using Lemma~\ref{lm:1-mismatch-sketches}\eqref{it:power} and Lemma~\ref{lm:1-mismatch-sketches}\eqref{it:concat}. 

\paragraph{Case 3: $\ell \ge \ell_j^1+2^i$} In this case $T[\ell-2^i,\ell-1]$ is a suffix of $T[\ell_j^1,\ell-1]$, which is a power of the minimal string period of $S_i[1,2^j]$, that is, $S_i[1, \rho_{ij}]$. As we know the $1$-mismatch sketches of $T[1, \ell_j^1-1]$, $S_i[1,\rho_{ij}]$, and $S_i [1,\delta_{ij}]$ (recall that $\delta_{ij}$ is the remainder of $-2^i$ modulo $\rho_{ij}$), the algorithm can use Lemma~\ref{lm:1-mismatch-sketches} to compute the $1$-mismatch sketches of $T[1,\ell-1]$ (by extending $T[1, \ell_j^1-1]$ with a power of $S_i[1,\rho_{ij}]$) and $T[1,\ell-2^i-1]$ (by subtracting the sketches of a power of $S_i[1,\rho_{ij}]$ of exponent $\ceil{2^i/\rho_{ij}}$ and adding the sketches of $S_i [1,\delta_{ij}]$) in $\Oh(\log^3 m / \log \log m)$ time; see Fig.~\ref{fig:case3}. It can then use them to determine whether $T[\ell-2^i, \ell-1]$ is a $1$-mismatch occurrence of $P_i$. 

\begin{figure}[htpb]
\begin{center}
\begin{tikzpicture}[xscale=0.75,yscale=0.85]
\draw[thick] (0,0)--(15,0);
\draw[thick] (1,-0.1)--(1,0.1);
\node[below] at (1,-0.1) {$\ell_j^1$};
\foreach \x in {0,1,...,10}{\draw[xshift=\x cm] (1,0) to[out=40, in=140] (2,0);}
\draw (14,0) node[above] {$T$};
\draw (1.5,0.1) node[above] {\small $S_i[1,\rho_{ij}]$};
\draw[thick] (12,-0.1)--(12,0.1);
\node[below] at (12,-0.1) {$\ell$};
\draw[latex-latex] (4.4,0.7) -- node[above] {$2^i$} (12,0.7);
\draw[densely dotted] (4.4,0) -- (4.4,0.7);
\draw[densely dotted] (12,0) -- (12,0.7);
\draw[snake=brace] (4.4,-0.5) -- node[below] {\small $S_i[1,\delta_{ij}]$} (4,-0.5);
\draw[densely dotted] (4.4,0) -- (4.4,-0.5);
\draw[densely dotted] (4,0) -- (4,-0.5);
\end{tikzpicture}
\end{center}
\caption{Case 3 of Lemma~\ref{lm:occ_verification}. Position $\ell$ is far enough from $\ell_j^1$.}
\label{fig:case3}
\end{figure}
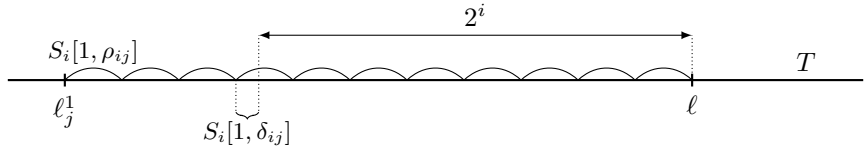

\end{proof}

The algorithm uses the lemma both to compile the output and to update the levels. When the algorithm encounters a position $\ell$ in the top level that is preceded by a right-half $1$-mismatch occurrence of $P_i$ and followed by an exact occurrence of $S_i$, the algorithm outputs it together with the required difference of symbols, which is computed from the 1-mismatch sketches of $T[\ell-2^i+1,\ell]$ using Lemma~\ref{lem:Porat} to determine the mismatch position and then Lemma~\ref{lem:DataRecovery}. Let us show how the algorithm updates the levels. When a new symbol $T[q]$ arrives and $T[q] = S_i[1]$, the algorithm adds $q$ to the level $0$. If we know from the first process that $q$ is preceded by a right-half $1$-mismatch occurrence of $P_i$, then the algorithm also tries to update the $\ell_0^a$ values and possibly retains the $1$-mismatch sketches of $T[1, q-2^i-1]$ and $T[1,q-1]$ output by the first process. The algorithm then updates each of the remaining levels in turn. To update level $j$, it considers the leftmost position $\ell_j$ in this level. If the Rabin--Karp fingerprints imply that it is an occurrence of $S_i[1,2^{j+1}]$, the algorithm promotes it to the next level, and otherwise discards it. In the former case the algorithm uses Lemma~\ref{lm:occ_verification} to check whether $\ell_j$ is preceded by a right-half $1$-mismatch occurrence of $P_i$. If it is, it computes the $1$-mismatch sketches and updates the positions $\ell_{j+1}^a$.

\subsubsection{Complexity} 
For each $i$ we run the two processes in parallel. The bottleneck of the first process is the $1$-\Mis algorithm; it uses $\Oh(\log^4 m / \log \log m)$ space and $\Oh(\log^3 m / \log \log m)$ time per symbol. The time complexity of the second process is bounded by $\Oh(\log m)$ applications of the algorithm of Lemma~\ref{lm:occ_verification} (one per level) and is $\Oh(\log^4 m / \log \log m)$ per symbol. The space complexity of the second process is $\Oh(\log^3 m / \log \log m)$. Therefore, both the space and the time that our $1$-\MwDR algorithm uses per symbol is $\Oh(\log^5 m / \log \log m)$.

\subsection{Proof of Theorem~\ref{th:k-mismatch}: general value of $k$}
Theorem~\ref{th:k-mismatch} follows from Theorem~\ref{th:1-mismatch} via a randomised reduction. The first variant of this reduction, which was deterministic, was presented by Porat and Porat~\cite{Porat:09}, who used it to reduce the $k$-\Mis problem to the $1$-\Mis problem. It was further made more space-efficient at a return of slightly higher error probability by Clifford et al.~\cite{k-mismatch}. The main idea of this reduction is to consider a number of partitions of the pattern into $\Oh(k \log^2 m)$ subpatterns. By defining the partitions appropriately, we can guarantee that at each alignment where the Hamming distance is small, each mismatch will correspond to a 1-mismatch occurrence of some subpattern. This lets us apply the $1$-\MwDR algorithm to find all such alignments and to restore the data for them. We give a full description of the reduction below. 

We start by filtering out the locations where the Hamming distance is large. To this end, we use a randomised algorithm for the following \emph{$2$-approximate $k$-mismatch problem}. Let $H$ be the true Hamming distance at a particular alignment of the pattern and the text. The algorithm outputs ``Yes'' if $H \le k$, either ``Yes'' or ``No'' if $k < H \le 2k$, and ``No'' if $H > 2k$. 

\begin{lemma}[\cite{k-mismatch}]\label{th:approximate}
Given a pattern $P$ of length $m$ and a streaming text arriving one symbol at a time, there is a randomised $\Oh(k^2 \log^6 m)$-space algorithm which takes $\Oh(\log^5 m)$ worst-case time per arriving symbol and solves the $2$-approximate $k$-mismatch problem. The probability of error is at most~$1/m^2$.
\end{lemma}

Next, we consider $\ell=\floor{\log m}$ partitions of the pattern into equispaced subpatterns. More precisely, we select $\ell$ random primes from the interval $[k \log^2 m, 34 k \log^2 m]$. For each prime $p_i$ the pattern $P$ is then partitioned into $p_i$ subpatterns $P_{i,j} = P [j] P [p_i+j] P[2p_i+j] \ldots$, where $j = 1,\ldots, p_i$. Consider a particular location $q$ in the text and the set of all mismatches between $P$ and $T[q-m+1,q]$. Let us call a mismatch \emph{isolated} if it is the only mismatch between some subpattern $P_{i,j}$ and the corresponding subsequence of $T[q-m+1,q]$, and let $M_q$ be the set of all isolated mismatches at an alignment $q$. 
 
\begin{lemma}[\cite{k-mismatch}]
If $|M_q| \le 2k$, the set $M_q$ contains all mismatches between $P$ and $T[q-m+1,q]$ with probability at least $1-1/m^2$.
\end{lemma}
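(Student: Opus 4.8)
The plan is to fix the alignment $q$ and work with the fixed set $D \subseteq \{1,\ldots,m\}$ of mismatch positions between $P$ and $T[q-m+1,q]$; crucially, $D$ does not depend on the random primes $p_1,\ldots,p_\ell$. Unwinding the definition, a mismatch $t \in D$ is \emph{isolated} precisely when there is an index $i$ such that no other $t' \in D$ satisfies $t' \equiv t \pmod{p_i}$; equivalently, $t$ fails to be isolated iff for \emph{every} $i$ there is some $t' \in D \setminus \{t\}$ with $p_i \mid (t-t')$. Since all positions lie in $[1,m]$, the relevant differences $d = t-t'$ are nonzero integers with $|d| \le m-1$. I would first dispose of the degenerate case $k\log^2 m \ge m$: then no prime in $[k\log^2 m, 34k\log^2 m]$ can divide such a $d$, so every mismatch is isolated deterministically and $M_q=D$. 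Throughout the remaining argument I work under $|D| \le 2k$, which is exactly the regime guaranteed by the $2$-approximate filter of Lemma~\ref{th:approximate} at every surviving alignment, and under which the conclusion $M_q = D$ yields $|M_q| = |D| \le 2k$, consistent with the hypothesis.

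The heart of the argument is a per-prime collision bound. For a single prime $p$ drawn uniformly from the primes in $I = [L, 34L]$, where $L = k\log^2 m$, and a fixed nonzero $d$ with $|d| \le m$, I would write
$\prob{p \mid d} = \dfrac{|\{p \in I : p \text{ prime},\, p \mid d\}|}{|\{p \in I : p \text{ prime}\}|}$.
The numerator is at most $\log_{L} m$, because every prime factor of $d$ lying in $I$ exceeds $L$ and their product divides $d \le m$, so there can be at most $\log_L m$ of them. For the denominator I would invoke the prime number theorem (or an explicit Chebyshev/Rosser--Schoenfeld bound, which is where the constant $34$ is chosen large enough to force the estimate uniformly over all admissible $m$) to get at least $\Omega(L/\log L)$ primes in $I$. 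Multiplying out, the $\log L$ factors cancel against $\log_L m$ and leave $\prob{p \mid d} = O\!\big(1/(k\log m)\big)$.

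From here the proof is a union bound followed by an independence amplification. Summing the per-difference bound over the at most $|D|-1 \le 2k$ other mismatches shows that, for a fixed $t$ and a single prime, the probability that $t$ collides with some other mismatch is at most $2k \cdot O(1/(k\log m)) = O(1/\log m)$, which the choice of the constant $34$ makes at most a fixed constant such as $1/16$ for all relevant $m$. Because the $\ell = \floor{\log m}$ primes are drawn independently and the collision event for $t$ under $p_i$ depends only on $p_i$, the probability that $t$ fails to be isolated is at most $(1/16)^{\floor{\log m}}$, which is $m^{-4}$ up to a constant. Finally, a union bound over the at most $|D| \le 2k \le m$ mismatch positions gives $\prob{M_q \ne D} \le 2k \cdot m^{-4}$, which is $\le 1/m^2$ after absorbing constants (with room to spare for large $m$).

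I expect the main obstacle to be the number-theoretic estimate of the second paragraph: making the per-prime collision probability genuinely $O(1/(k\log m))$ requires simultaneously controlling the number of large prime divisors of each difference and securing a reliable lower bound on the count of primes in the dilated interval $[L,34L]$, valid uniformly in $m$ rather than only asymptotically. Pinning down the constant $34$ (and the threshold on $m$ beyond which the estimate holds) so that the single-prime non-isolation probability drops below a fixed constant is the delicate part; once that constant bound is in hand, the amplification over $\floor{\log m}$ independent primes and the final union bound are routine.
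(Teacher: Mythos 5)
This lemma is stated in the paper as an import from \cite{k-mismatch} and is never proved there, so there is no in-paper proof to compare against; judged on its own, your argument is correct, and it is essentially the canonical argument behind the cited result: reduce isolation to the arithmetic condition ``no other mismatch is congruent modulo $p_i$,'' bound the per-prime collision probability by counting prime divisors of each difference against the density of primes in $[k\log^2 m, 34k\log^2 m]$, take a union bound over the at most $2k$ other mismatches to get a constant per-prime failure probability, amplify over the $\floor{\log m}$ independently drawn primes, and finish with a union bound over mismatch positions. The numerics check out: a difference $0<|d|<m$ has at most $\log_L m$ prime divisors in $[L,34L]$, the interval contains $\Omega(L/\log L)$ primes, so one prime hits one difference with probability $O(1/(k\log m))$; your choice of the constant $1/16$ is exactly what is needed so that $(1/16)^{\floor{\log m}}\approx m^{-4}$ survives the final union bound over at most $2k$ positions with the claimed margin $1/m^2$ (for $m$ large enough, which is the standard caveat for such statements).

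One substantive point deserves emphasis, and it is to your credit: you replace the stated hypothesis $|M_q|\le 2k$ by the hypothesis that the \emph{true} mismatch set $D$ satisfies $|D|\le 2k$. This substitution is necessary, not cosmetic. Read literally, the statement is false: if every position of the alignment mismatches, then for every prime each subpattern contains many mismatches, so no mismatch is isolated, $M_q=\emptyset$, the hypothesis $|M_q|\le 2k$ holds with probability one, and yet $M_q$ misses every mismatch. The version you prove --- true Hamming distance at most $2k$ implies $M_q=D$ with probability at least $1-1/m^2$ --- is the statement that is actually true and the one the reduction uses, since the filter of Lemma~\ref{th:approximate} guarantees (up to its own error probability, and with randomness independent of the primes) that the lemma is only invoked at alignments whose true Hamming distance is at most $2k$. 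Your explicit handling of the degenerate case $k\log^2 m\ge m$, where isolation holds deterministically, is also a correct detail that informal treatments usually skip.
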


To finalise the reduction, we partition the text $T$ into $p_i$ equispaced substreams $T_{i,j}$, where $j = 1,\ldots, p_i$, for each prime $p_i$. We run the $1$-\MwDR algorithm for all $\sum_i p_i^2 = \Oh(k^2 \log^5 m)$ subpattern/substream pairs $(P_{i,{j_1}}, T_{i,{j_2}})$. At each location where the algorithm for the $2$-approximate problem outputs ``Yes'', we retrieve all isolated mismatches and the data using the appropriate $p_i$ instances of the $1$-\MwDR problem for each $i$. This completes the description of our solution to the $k$-\MwDR problem. 

In total, we use $\Oh(k^2 \log^{10} m / \log \log m)$ space. To analyse the time complexity, note that when a new symbol of $T$ arrives, we need to send it to $\Oh(\log m)$ substreams only (one for each prime) and run the next step for each of the $\Oh(k \log^2 m)$ instances of the $1$-\MwDR problem on each of these substreams, which requires $\Oh(k \log^8 m / \log \log m)$ time per symbol.

\section{Applications~--- Weighted Pattern Matching}
\label{sec:WPM}
	Below we present the proofs of Theorems~\ref{th:WPM-kmismatch} and~\ref{th:WPM-dictionary} that give two series of streaming algorithms for the \WPM problem. We present the proofs in parallel. We conclude in Section~\ref{subsec:prop} with a proof of Proposition~\ref{prp:WPT-lb} that states an $\Omega(z)$ lower bound on the space complexities of such algorithms.
 
\subsection{Preliminaries} 
We first introduce a notion of a heavy string $\heavy{P}$  of a weighted string $P$.

\begin{definition}
 For a weighted string $P$, by $\heavy{P}$ we denote a regular string obtained from $P$ by choosing at each position the symbol with the maximum probability (ties are handled arbitrarily).
\end{definition}

The observation below, initially introduced in \cite{Kociumaka2018}, shows a key property of this notation.

\begin{observation}\label{obs:logz}
If a string $S$ matches a weighted string $P$ with probability at least $1/z$, then the number of mismatches between $\heavy{P}$ and $S$ is at most $\log z$.
\end{observation}
\begin{proof}
It follows from the fact that at each mismatch position the probability of $P$ and $S$ to match is at most $1/2$.
\end{proof}

\begin{observation}\label{obs:D}
The total number of strings that match a weighted string $P$ with probability at least $1/z$ is at most $z$.
\end{observation}
\begin{proof}
The sum of the match probabilities over all such strings cannot exceed~$1$. The claim follows.
\end{proof}

\subsection{Case 1~--- only pattern is weighted}
\label{sec:WPST}
We start by presenting two algorithms for the \WPM problem for a weighted pattern and a regular text.

\subsubsection{Solution via $k$-\MwDR} 
Alongside Observation~\ref{obs:logz}, the main idea of our solution is to find all $\log z$-occurrences of $\heavy{P}$ in $T$, and then to filter out those corresponding to the alignments where the match probability is too small. 

Let $D$ be the set of all (regular) strings that match the pattern with probability at least $1/z$. If a substring of the text $T$ matches $P$ with probability $\ge 1/z$, it must belong to the set $D$. During the preprocessing phase, we do not compute the set $D$ itself, but would like to compute a set $M$ of mismatches between $\heavy{P}$ and the strings in $D$. More formally:
$$M=\{(i,a)\,:\, a = S[i] \ne \heavy{P}[i],\,S \in D\}.$$
\begin{observation}\label{obs:M}
  $|M| \le z-1$.
\end{observation}
\begin{proof}
  From each mismatch $(i, a) \in M$ one can produce a regular string that matches $P$ with probability $\ge 1/z$ by taking $\heavy{P}$ and replacing $\heavy{P}[i]$ by $a$. Due to Observation~\ref{obs:D} and the fact that $\heavy{P} \in D$, there are at most $z-1$ such strings in $D$.
\end{proof}

\begin{example}\label{ex:WS}
  Consider the weighted pattern $P$ corresponding to the weighted string $X$ from Table~\ref{tab:WS}. We have $\heavy{P}=\mathtt{ABAB}$. Let $z=8$.

  \begin{table}[htpb]
      \begin{center}
      \begin{tabular}{r|p{0.4cm}|p{0.4cm}|p{0.4cm}|p{0.4cm}}
        $i$ &1&2&3&4\\\hline
        probability of $\mathtt{A}$ & $\tfrac12$ & 0 & $\tfrac12$ & $\tfrac16$ \\\hline
        probability of $\mathtt{B}$ & $\tfrac18$ & 1 & $\tfrac38$ & $\tfrac23$ \\\hline
        probability of $\mathtt{C}$ & $\tfrac38$ & 0 & $\tfrac18$ & $\tfrac16$ \\
      \end{tabular}
      \end{center}
    \caption{
      A weighted string $X$ of length 4 over $\Sigma=\{\mathtt{A},\mathtt{B},\mathtt{C}\}$.
        \label{tab:WS}
    }
  \end{table}

  We have $D=\{\mathtt{ABAB},\mathtt{AB\underline{B}B},\mathtt{\underline{C}BAB}\}$ with the probabilities $\tfrac16$, $\tfrac18$, and $\tfrac18$, respectively. Thus $M=\{(1,\mathtt{C}),(3,\mathtt{B})\}$ (the mismatches are underlined).
\end{example}

  To perform the preprocessing step in a streaming fashion with small space usage, we compute a slightly larger set $M'$ that consists of $z-1$ pairs $(i,a)$ for $i \in \{1,\ldots,m\}$ and $a \in \Sigma$ with the greatest value of $p(i,a)=\frac{\prob{P[i] = a}}{\prob{P[i] = \heavy{P}[i]}}$ (if there are less than $z-1$ such pairs in total, the set $M'$ contains all such pairs).
  
\begin{observation}\label{obs:M1}
  $M \subseteq M'$.
\end{observation}
\begin{proof}
  Assume to the contrary that $(i,a) \in M \setminus M'$. This implies that $|M'|=\floor{z}-1$. Using the same argument as in the proof of Observation~\ref{obs:M}, from every pair in $M' \cup (i,a)$ one can obtain a different string from $D$. This would yield $|D|>z-1$, which contradicts Observation~\ref{obs:D}.
\end{proof}

\begin{example}
For the weighted pattern from Example~\ref{ex:WS}, one might have
$$M'=\{
(1,\mathtt{C},\tfrac34),\,
(3,\mathtt{B},\tfrac34),\,
(1,\mathtt{B},\tfrac14),\,
(3,\mathtt{C},\tfrac14),\,
(4,\mathtt{A},\tfrac14),\,
(4,\mathtt{C},\tfrac14),\,
(2,\mathtt{A},0)
\}.$$
The elements of $M'$ are ordered by $p(i,a)$.
\end{example}
  
Each element of $M'$ is stored as a triple $(i,\heavy{P}[i]-a,p(i,a))$ in a priority queue ordered by $p(i,a)$. To construct the set $M'$, for $i=1,\ldots,m$ we insert all $(i,a)$ with $a \ne \heavy{P}[i]$ into the priority queue. We keep the size of $M'$ not greater than $z-1$ by removing the elements with the smallest $p(i,a)$ if needed. In the end we compute a balanced binary search tree that indexes all the mismatches from $M'$ with pairs $(i,\heavy{P}[i]-a)$. We also compute and store the probability $\pi$ that $\heavy{P}$ matches $P$. The time needed for the preprocessing step is $\Oh(z \log z)$ and the total space consumption is $\Oh(z)$.

During the main phase, we run a $k$-\MwDR algorithm for $k = \log z$, $\heavy{P}$, and $T$. From Observation~\ref{obs:logz} it follows that the algorithm will report all alignments where $P$ and $T$ match with probability at least $1/z$, but it might report some other alignments, and we need to filter them out. Recall that every occurrence found by the $k$-\MwDR algorithm applied for $\heavy{P}$ and $T$ with $k = \log z$ is reported together with the at most $k$ mismatch positions and the corresponding letter differences. Therefore, we can use the balanced binary search tree that stores the set $M'$ to compute the match probability by updating $\pi$ with the probabilities of the at most $k=\log z$ mismatches. If any of the mismatches is not present in $M'$, then by Observation~\ref{obs:M1} the candidate does not belong to the set $D$ and its match probability is certainly below the $1/z$ threshold. Thus the time needed to retrieve the match probability (provided that it is at least $1/z$) is $\Oh(\log^2 z)$. With this we arrive at the first algorithm of Theorem~\ref{th:WPM-kmismatch}:

\begin{proposition}
If only the pattern is weighted, there is a streaming algorithm that solves the \WPM problem in $\Oh(z)+ \S_{\log z}$ space and $\Oh(\log^2 z)+\T_{\log z}$ time per arrival. At each arrival, the algorithm can err with probability~$\PP_{\log z}$. 
\end{proposition}

Let us note that, in the considered case of $z < m$, $\T_{\log z} = \Omega(\log^2 z)$ both for the algorithm of Theorem~\ref{th:k-mismatch} and the algorithm of Clifford et at.~\cite{DBLP:conf/soda/CliffordKP19}.

\subsubsection{Solution via \DM} 
As in the previous subsection, let $D$ be the set of all (regular) strings that match the pattern with probability at least $1/z$. Let us recall that if a substring of the text $T$ matches $P$ with probability $\ge 1/z$, it must belong to the set $D$. To identify such substrings, we will use the streaming \DM algorithm for the set $D$ and the text~$T$. The current best algorithm is by Golan and Porat~\cite{DBLP:conf/icalp/GolanKP18}, which for a dictionary $D$ of $m$-length strings takes $\Oh(|D| \log m)$ space and $\Oh(1)$ time per symbol. The algorithm is randomised and has error probability $1/\poly(n)$.

We arrive at the first algorithm of Theorem~\ref{th:WPM-dictionary}:

\begin{proposition}
If only the pattern is weighted, there is a streaming algorithm that solves the \WPM problem in $\Oh(z \log m)$ space and $\Oh(1)$ time per arrival. At each arrival, the algorithm can err with probability~$1/\poly(n)$. 
\end{proposition}


\subsection{Case 2~--- only text is weighted}	
\label{sec:SPWT}
In this section we show two $(1-\eps)$-approximate solutions to the \WPM problem for a regular pattern $P$ and a weighted text~$T$. We assume that $\eps < 1/2$. If $\eps > 1/2$, we can use the $1/2$-approximation algorithm that has the same asymptotic complexity and better approximation factor. We start by giving a definition of maximal matching suffixes that will play a crucial role in both algorithms.

\begin{definition}
A \emph{maximal matching suffix} of a weighted string $T$ is a (regular) string $S$ such that $S$ matches $T[|T|-|S|+1,|T|]$ with probability at least $1/(2z)$ and either $|S|=|T|$ or any string $aS$, for $a \in \Sigma$, matches $T[|T|-|S|,|T|]$ with probability smaller than $1/(2z)$.
\end{definition}

\begin{remark}
  The reason for selecting the cut-off value of $1/(2z)$ will become clear later (in Lemma~\ref{lem:SWP_str}, where we need the cut-off to be at most $(1-\eps)/z$). 
\end{remark}

We will use the observation below; intuitively, it follows from the fact that the sum of probabilities of maximal matching suffixes of a string does not exceed~1.

\begin{observation}[\cite{amir_weighted_property_matching_j}]\label{obs:Amir}
  A weighted string has at most $2z$ maximal weighted suffixes.
\end{observation}

\begin{example}\label{ex:max_s}
  For the weighted text $T$ corresponding to the weighted string $X$ from Table~\ref{tab:WS} and $z=8$, the set of maximal matching suffixes is:
  $$\{
  \mathtt{ABAB}\ (\tfrac16),\,
  \mathtt{AB\underline{B}B}\ (\tfrac18),\,
  \mathtt{\underline{C}BAB}\ (\tfrac18),\,
  \mathtt{\underline{C}B\underline{B}B}\ (\tfrac{3}{32}),\,
  \mathtt{B\underline{C}B}\ (\tfrac{1}{12}),\,
  \mathtt{BA\underline{A}}\ (\tfrac{1}{12}),\,
  \mathtt{BA\underline{C}}\ (\tfrac{1}{12}),\,
  \mathtt{B\underline{B}\underline{A}}\ (\tfrac{1}{16}),\,
  \mathtt{B\underline{B}\underline{C}}\ (\tfrac{1}{16})
  \}.$$
  Mismatches with $\heavy{T}$ are underlined and matching probabilities of each suffix are listed.
\end{example}

Importantly, if $P$ matches $T[q-m+1,q]$ with probability $\ge 1/z$, then there is a maximal matching suffix $S$ of $T[1,q]$ that ends with $P$. Moreover, the match probability between $P$ and $T[q-m+1,q]$ is equal to the match probability between the $m$-length suffix of $S$ and $T[q-m+1,q]$. To be able to compute the latter, we introduce a new problem which we refer to as the \SWP problem. In this problem we are given a stream of numbers from $[0,1]$ and an integer $m$. Each time a new number arrives, we must update and output the product of numbers in the $m$-length suffix of the stream. We will develop a $(1-\eps)$-approximation solution to the \SWP problem.

We summarize our two solutions as Algorithm~\ref{alg:SPWT}. The only difference between them is how we find the maximal matching suffixes that end with $P$. In the following subsections we explain each step in detail.

\begin{algorithm}
\begin{algorithmic}
\For{each new text symbol $T[q+1]$}
	\State Update the set of maximal matching suffixes
	\For {each maximal matching suffix $S$}
		\State Run the next step of \SWP
	\EndFor
	\If {there is a maximal matching suffix $S$ that ends with $P$}
		\State Use the output of \SWP for $S$ to compute a $(1-\eps)$-approximation $p$ of the match probability between $P$ and $T$
    \State If $p \ge (1-\eps)/z$, report an occurrence
	\EndIf
\EndFor
\end{algorithmic}
\caption{\WPM ~--- only text is weighted}
\label{alg:SPWT}
\end{algorithm}

\subsubsection{Computing maximal matching suffixes} We maintain each maximal matching suffix $S$ of $T[1,q]$ as a stream. The stream represents letters $s_i$, $i \le q$, that correspond to $S[i+|S|-q]$ and probabilities $x_i$ that are equal to the match probability between $s_i$ and $T[i]$. The letters $s_i$ for $1 \le i \le q-|S|$ can be arbitrary (they will be implied by the previous steps of the algorithm) and probabilities $x_i$ for $1 \le i \le q-|S|$ correspond to the matching probability of these letters and $T[i]$. Each of the streams will be stored using only $\Oh(\log_{1/(1-\eps)} z)$ space (in particular, we do not store all $s_i$ and $x_i$ explicitly).

We call a position $i$ in the stream a \emph{mismatch position} if $s_i \ne \heavy{T}[i]$. Let $r$ be the rightmost mismatch position in the stream such that $\prod_{i \ge r} x_i < 1/(2z)$. If such a position does not exist, we put $r = 0$. By Observation~\ref{obs:logz}, there are at most $\log z+1$ mismatch positions to the right of $r$. We index the stream by these mismatch positions and the differences between $s_i$ and $\heavy{T}[i]$ in these positions. We also store the product of the probabilities located to the right of each of these at most $\log z+1$ mismatch positions.

When a new text symbol $T[q+1]$ arrives, we first create $|\Sigma|$ copies of each stream. We then add $\prob{T[q+1] = b}$ for each $b \in \Sigma$ to the $b$-th copy of each stream and update the mismatches, the indices, and the  related information in $\Oh(\log z)$ time using the stored products of probabilities. At this moment the value $r$ in some streams might increase.
Furthermore, there might appear ``duplicate'' streams with equal indices. Duplicate streams correspond to a single maximal matching suffix and possibly its suffixes (which are not maximal matching suffixes of $T[1,q+1]$). We sort the streams by building a trie on their indices and delete the duplicates, leaving for each stream index one stream with the smallest position $r$. This takes $\Oh(z \log z)$ space and time in total.

\begin{example}
  Let us consider an extension of the weighted string from Table~\ref{tab:WS} by a single position that is shown in Table~\ref{tab:WS2}.

  \begin{table}[htpb]
      \begin{center}
      \begin{tabular}{r|p{0.4cm}|p{0.4cm}|p{0.4cm}|p{0.4cm}|p{0.4cm}}
        $i$ &1&2&3&4&5\\\hline
        probability of $\mathtt{A}$ & $\tfrac12$ & 0 & $\tfrac12$ & $\tfrac16$ & $\tfrac23$ \\\hline
        probability of $\mathtt{B}$ & $\tfrac18$ & 1 & $\tfrac38$ & $\tfrac23$ & $\tfrac13$ \\\hline
        probability of $\mathtt{C}$ & $\tfrac38$ & 0 & $\tfrac18$ & $\tfrac16$ & $0$ \\
      \end{tabular}
      \end{center}
    \caption{
      A weighted string $X'$ of length 5 over $\Sigma=\{\mathtt{A},\mathtt{B},\mathtt{C}\}$.
        \label{tab:WS2}
    }
  \end{table}

  Table~\ref{tab:big} shows how the first five maximal matching suffixes from Example~\ref{ex:max_s} are extended with position 5. Letters in gray are not part of a maximal matching suffix, but they are still present in the stream. The matching probability of the matching suffix is given for reference (in brackets); this value is not stored.
  
  Amongst the streams with stream index $(5,\mathtt{B})$, the first one has $r=0$, the second has $r=1$, and the third has $r=3$ (and, notably, it does not correspond to a maximal matching suffix). Thus, the first one of these streams is retained.
\end{example}

\begin{table}[htpb]
\begin{center}
\setlength{\tabcolsep}{0pt}
\renewcommand{\arraystretch}{1.3}
\begin{tabular}{rrp{0.2cm}rrp{0.2cm}rrp{0.2cm}rrp{0.2cm}cp{0.2cm}|p{0.2cm}rrp{0.2cm}rrp{0.2cm}rrp{0.2cm}rrp{0.2cm}rrp{0.2cm}cp{0.2cm}|p{0.2cm}l}
\multicolumn{12}{c}{before extension} & & & \multicolumn{15}{c}{after extension} & & & & & \multirow{2}{*}{stream index} \\
  \multicolumn{2}{c}{1} & & \multicolumn{2}{c}{2} & & \multicolumn{2}{c}{3} & & \multicolumn{2}{c}{4} & & & &
  & \multicolumn{2}{c}{1} & & \multicolumn{2}{c}{2} & & \multicolumn{2}{c}{3} & & \multicolumn{2}{c}{4} & & \multicolumn{2}{c}{5} & & & & & \\\hline
  \multirow{ 2}{*}{\texttt{A}} & \multirow{ 2}{*}{$\frac12$} & \multirow{ 2}{*}{\,} & \multirow{ 2}{*}{\texttt{B}} & \multirow{ 2}{*}{$1$} & \multirow{ 2}{*}{\,} & \multirow{ 2}{*}{\texttt{A}} & \multirow{ 2}{*}{$\frac12$} & \multirow{ 2}{*}{\,} & \multirow{ 2}{*}{\texttt{B}} & \multirow{ 2}{*}{$\frac23$} & \multirow{ 2}{*}{\,} & \multirow{ 2}{*}{$(\frac16)$} & \multirow{ 2}{*}{\,}
  & & \texttt{A} & $\frac12$ & & \texttt{B} & $1$ & & \texttt{A} & $\frac12$ & & \texttt{B} & $\frac23$ & & \texttt{A} & $\frac23$ & & $(\frac19)$ & & & $\emptyset$ \\
  &&&&&&&&&&&&&&
  & \textcolor{white!70!black}{\texttt{A}} & \textcolor{white!70!black}{$\frac12$} & & \texttt{B} & $1$ & & \texttt{A} & $\frac12$ & & \texttt{B} & $\frac23$ & & \underline{\texttt{B}} & $\frac13$ & & $(\frac19)$ & & & $(5,\mathtt{B})$ \\\hline
  \multirow{ 2}{*}{\texttt{A}} & \multirow{ 2}{*}{$\frac12$} & \multirow{ 2}{*}{\,} & \multirow{ 2}{*}{\texttt{B}} & \multirow{ 2}{*}{$1$} & \multirow{ 2}{*}{\,} & \multirow{ 2}{*}{\underline{\texttt{B}}} & \multirow{ 2}{*}{$\frac38$} & \multirow{ 2}{*}{\,} & \multirow{ 2}{*}{\texttt{B}} & \multirow{ 2}{*}{$\frac23$} & \multirow{ 2}{*}{\,} & \multirow{ 2}{*}{$(\frac18)$} & \multirow{ 2}{*}{\,}
  & & \texttt{A} & $\frac12$ & & \texttt{B} & $1$ & & \underline{\texttt{B}} & $\frac38$ & & \texttt{B} & $\frac23$ & & \texttt{A} & $\frac23$ & & $(\frac{1}{12})$ & & & $(3,\mathtt{B})$ \\
  &&&&&&&&&&&&&&
  & \textcolor{white!70!black}{\texttt{A}} & \textcolor{white!70!black}{$\frac12$} & & \texttt{B} & $1$ & & \underline{\texttt{B}} & $\frac38$ & & \texttt{B} & $\frac23$ & & \underline{\texttt{B}} & $\frac13$ & & $(\frac{1}{12})$ & & & $(3,\mathtt{B}),(5,\mathtt{B})$ \\\hline
  \multirow{ 2}{*}{\underline{\texttt{C}}} & \multirow{ 2}{*}{$\frac38$} & \multirow{ 2}{*}{\,} & \multirow{ 2}{*}{\texttt{B}} & \multirow{ 2}{*}{$1$} & \multirow{ 2}{*}{\,} & \multirow{ 2}{*}{\texttt{A}} & \multirow{ 2}{*}{$\frac12$} & \multirow{ 2}{*}{\,} & \multirow{ 2}{*}{\texttt{B}} & \multirow{ 2}{*}{$\frac23$} & \multirow{ 2}{*}{\,} & \multirow{ 2}{*}{$(\frac18)$} & \multirow{ 2}{*}{\,}
  & & \underline{\texttt{C}} & $\frac38$ & & \texttt{B} & $1$ & & \texttt{A} & $\frac12$ & & \texttt{B} & $\frac23$ & & \texttt{A} & $\frac23$ & & $(\frac{1}{12})$ & & & $(1,\mathtt{C})$ \\
  &&&&&&&&&&&&&&
  & \textcolor{white!70!black}{\underline{\texttt{C}}} & \textcolor{white!70!black}{$\frac38$} & & \texttt{B} & $1$ & & \texttt{A} & $\frac12$ & & \texttt{B} & $\frac23$ & & \underline{\texttt{B}} & $\frac13$ & & $(\frac{1}{9})$ & & & $(5,\mathtt{B})$, duplicate \\\hline
  \multirow{ 2}{*}{\underline{\texttt{C}}} & \multirow{ 2}{*}{$\frac38$} & \multirow{ 2}{*}{\,} & \multirow{ 2}{*}{\texttt{B}} & \multirow{ 2}{*}{$1$} & \multirow{ 2}{*}{\,} & \multirow{ 2}{*}{\underline{\texttt{B}}} & \multirow{ 2}{*}{$\frac38$} & \multirow{ 2}{*}{\,} & \multirow{ 2}{*}{\texttt{B}} & \multirow{ 2}{*}{$\frac23$} & \multirow{ 2}{*}{\,} & \multirow{ 2}{*}{$(\frac{3}{32})$} & \multirow{ 2}{*}{\,}
  & & \underline{\texttt{C}} & $\frac38$ & & \texttt{B} & $1$ & & \underline{\texttt{B}} & $\frac38$ & & \texttt{B} & $\frac23$ & & \texttt{A} & $\frac23$ & & $(\frac{1}{16})$ & & & $(1,\mathtt{C}),(3,\mathtt{B})$ \\
  &&&&&&&&&&&&&&
  & \textcolor{white!70!black}{\underline{\texttt{C}}} & \textcolor{white!70!black}{$\frac38$} & & \texttt{B} & $1$ & & \underline{\texttt{B}} & $\frac38$ & & \texttt{B} & $\frac23$ & & \underline{\texttt{B}} & $\frac13$ & & $(\frac{1}{12})$ & & & $(3,\mathtt{B}),(5,\mathtt{B})$, duplicate \\\hline
  \multirow{ 2}{*}{\textcolor{white!70!black}{\texttt{A}}} & \multirow{ 2}{*}{\textcolor{white!70!black}{$\frac12$}} & \multirow{ 2}{*}{\,} & \multirow{ 2}{*}{\texttt{B}} & \multirow{ 2}{*}{$1$} & \multirow{ 2}{*}{\,} & \multirow{ 2}{*}{\underline{\texttt{C}}} & \multirow{ 2}{*}{$\frac12$} &\multirow{ 2}{*}{\,} & \multirow{ 2}{*}{\texttt{B}} & \multirow{ 2}{*}{$\frac23$} & \multirow{ 2}{*}{\,} & \multirow{ 2}{*}{$(\frac{1}{12})$} & \multirow{ 2}{*}{\,}
  & & \textcolor{white!70!black}{\texttt{A}} & \textcolor{white!70!black}{$\frac12$} & & \texttt{B} & $1$ & & \underline{\texttt{C}} & $\frac18$ & & \texttt{B} & $\frac23$ & & \texttt{A} & $\frac23$ & & $(\frac{1}{16})$ & & & $(3,\mathtt{C})$ \\
  &&&&&&&&&&&&&&
  & \textcolor{white!70!black}{\texttt{A}} & \textcolor{white!70!black}{$\frac12$} & & \textcolor{white!70!black}{\texttt{B}} & \textcolor{white!70!black}{$1$} & & \textcolor{white!70!black}{\underline{\texttt{C}}} & \textcolor{white!70!black}{$\frac18$} & & \texttt{B} & $\frac23$ & & \underline{\texttt{B}} & $\frac13$ & & $(\frac29)$ & & & $(5,\mathtt{B})$, duplicate
\end{tabular}
\end{center}
\caption{Extensions of maximal matching suffixes from Example~\ref{ex:max_s} by position 5 with letter probabilities \texttt{A}: $\frac23$, \texttt{B}: $\frac13$, \texttt{C}: $0$.}
\label{tab:big}
\end{table}

Finally, for each of the streams we store a data structure for the approximate solution of \SWP. It takes $\Oh(\log_{1/(1-\eps)} z)$ space per stream, so storing and updating all of them upon a letter arrival takes $\Oh(z \log_{1/(1-\eps)} z)$ space and time.

\subsubsection{Approximate solution to \SWP} Recall that in the \SWP problem we are given a stream of numbers from $[0,1]$ and an integer $m$. Each time a new number arrives, we must update and output the product of numbers in the $m$-length suffix of the stream. We give a $(1-\eps)$-approximation algorithm for the problem. The algorithm may output either a number or ``No''. If it outputs a number $y$, then the product of the numbers in the $m$-length suffix of the stream is between $y$ and $y/(1-\eps)$. Otherwise, the product is less than $(1-\eps)/z$. 

\begin{lemma}\label{lm:SWP}
For a stream of numbers $\{x_i\}_{i = 1}^\infty$, where $x_i \in [0,1]$,  arriving one at a time, and a window width $m$, there is a deterministic $(1-\eps)$-approximation algorithm that takes $\Oh(\log_{1/(1-\eps)} z )$ space and $\Oh(1)$ time per arrival and solves the \SWP problem. 
\end{lemma}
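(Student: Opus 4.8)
The plan is to maintain a bucketed decomposition of a suffix of the stream, chosen so that the window product is recovered, up to the desired factor, as the product of the stored bucket values. Concretely, I would partition the relevant suffix ending at the current position $q$ into consecutive buckets, storing for each bucket only its left and right endpoints and the product of its entries. New entries are appended to the most recent (active) bucket; I seal the active bucket and open a new one according to a rule designed to keep every bucket's product bounded away from $0$. To answer a query I keep, incrementally, the product of all currently stored buckets and report it (or ``No'') in $\Oh(1)$ time.

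The crucial design choice is the sealing rule. Since every entry is $\le 1$, appending entries only decreases a bucket's product, so I seal the active bucket as soon as its product would drop to $\le 1-\eps$; separately, any single entry that is itself $\le 1-\eps$ is made into its own singleton bucket. This has two consequences. First, every non-singleton bucket contains only entries $>1-\eps$ and is sealed with product in $((1-\eps)^2,1]$, while every singleton occupies exactly one position. Second, apart from a bounded number of bookkeeping buckets, each bucket decreases the running suffix product by a factor at most $1-\eps$, so any suffix whose product is $\ge (1-\eps)/z$ consists of $\Oh(\log_{1/(1-\eps)} z)$ buckets. I therefore keep exactly the minimal suffix that is either of length $\ge m$ or has product $< (1-\eps)/z$; dropping fully expired buckets and trimming the oldest ones keeps the number of stored buckets at $\Oh(\log_{1/(1-\eps)} z)$ and costs $\Oh(1)$ amortized time per arrival (de-amortizable by lazy trimming).

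The main obstacle is the window boundary: the oldest stored bucket generally straddles the position $q-m+1$, and since I store only aggregate bucket products I cannot peel off its expired entries exactly. Here the sealing rule pays off. If the straddling bucket is a singleton, it sits at a single position, hence exactly at $q-m+1$, and is accounted for with no error. If it is a non-singleton, its product, and therefore the product of its out-of-window prefix, is $>(1-\eps)^2$; consequently the reported value (the product of all stored buckets) lies in $((1-\eps)^2 W, W]$, where $W$ is the true window product, and rescaling $\eps$ by a constant yields the required $(1-\eps)$-approximation. It remains to justify the decision rule: whenever the kept suffix is shorter than $m$ its product is already $<(1-\eps)/z$, and since the true window is a superset of the kept suffix and all factors are $\le 1$, the window product is then also $<(1-\eps)/z$, so ``No'' is correct; symmetrically, when the suffix covers the window the reported approximation justifies the numeric answer. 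The remaining checks — that trimming never discards a bucket needed by a current or future window (older buckets only ever become more expired) and that the running product and length are maintainable in $\Oh(1)$ time — are routine.
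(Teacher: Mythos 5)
Your proposal is correct and takes essentially the same approach as the paper's proof: both maintain a greedy decomposition of a suffix of the stream into consecutive intervals (buckets) whose products are controlled by the threshold $1-\eps$, with singleton intervals for small entries, store each interval's product plus the running total, trim from the left so that only $\Oh(\log_{1/(1-\eps)} z)$ intervals survive, and handle the interval straddling the window boundary via the lower bound on its product, outputting ``No'' when the surviving suffix cannot cover a window of product above the threshold. The only differences are cosmetic: the paper caps the number of intervals explicitly at $M(z)$ and bounds the product of consecutive interval pairs, whereas you seal buckets at product $\le 1-\eps$, incur a $(1-\eps)^2$ approximation factor, and recover it by rescaling $\eps$.
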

\begin{proof}
At time $q$ the algorithm maintains a queue that consists of at most $M(z)=2\ceil{\log_{1-\eps} ((1-\eps)/z)}$ intervals $[i_1, i_2-1], [i_2, i_3-1], \ldots, [i_{k(q)}, i_{k(q)+1}-1]$ where $i_{k(q)+1}=q+1$, such that $1 \le i_1 < i_2 < \ldots < i_{k(q)} < i_{k(q)+1}$, which obeys the following invariant.

\begin{enumerate}[(i)]
  \item All intervals $[i_j,i_{j+1}-1]$ for $j \ge 2$ are subintervals of $[q-m+1,q]$, and $[i_1,i_2-1]$ has a non-empty intersection with $[q-m+1,q]$;
  \item If $x_{i_j} < 1-\eps$, then $i_{j+1}=i_j+1$;
  \item Otherwise, $x_{i_j}\cdot x_{i_j+1} \cdot \ldots \cdot x_{i_{j+1}-1} \ge 1-\eps$ and either $j = k(q)$ or $x_{i_j}\cdot x_{i_j+1} \cdot \ldots \cdot x_{i_{j+1}}<1-\eps$.
\end{enumerate}

The algorithm stores the product of numbers in each interval and the total product of numbers in all intervals. When $x_{q+1}$ arrives, it updates the family of intervals in the following way. Let $\pi$ be the product of numbers in the interval $[i_{k(q)}, q]$. If $\pi \cdot x_{q+1} \ge 1-\eps$, then the algorithm extends the interval $[i_{k(q)}, q]$ by the element $x_{q+1}$. Otherwise, it creates a new interval $[q+1,q+1]$. If the number of intervals becomes larger than $M(z)$ or if $i_2 \le q-m$, the algorithm deletes the leftmost interval $[i_1, i_2-1]$. Finally, it updates the total product of numbers in the intervals. Let us now explain how the algorithm exploits the intervals. 

Assume that $q \ge m$. Note that the product of numbers in each two consecutive intervals is at most $1-\eps$. Therefore, if $q-m+1 < i_1$, then the product of the last $m$ numbers is smaller than
$$(1-\eps)^{\floor{\frac12 M(z)}} = (1-\eps)^{\ceil{\log_{1-\eps} ((1-\eps)/z)}} \le (1-\eps)^{\log_{1-\eps} ((1-\eps)/z)} = \tfrac{1-\eps}{z}$$
and the algorithm outputs ``No''. Otherwise from the invariant it follows that $q-m+1 \in [i_1, i_2-1]$. In this case we output the total product of the numbers in the intervals. If $[i_1, i_2-1]$ is a singleton interval (i.e.\ $i_1 = i_2-1$), then $[i_1,i_{k(q)+1}-1] = [q-m+1,q]$, and the output equals $x_{q-m+1} \cdot x_{q-m+2} \cdot \ldots \cdot x_q$ exactly. If $[i_1, i_2-1]$ is not a singleton interval, then the product of numbers in it is at least $1-\eps$. Therefore, the output will be a $(1-\eps)$-approximation of $x_{q-m+1} \cdot x_{q-m+2} \cdot \ldots \cdot x_q$. The space complexity follows from the fact that $M(z) = \Oh(\log_{1/(1-\eps)} z)$. 
\end{proof}

\begin{lemma}\label{lem:SWP_str}
  Consider the stream $\{x_i\}_{i = 1}^q$ related to maximal matching suffix $S$ of $T[1,q]$ and let $q \ge m$.
  If the algorithm for \SWP returns ``No'' or a number that is smaller than $(1-\eps)/z$,
  then either $|S|<m$ or the probability $\pi$ that $S[|S|-m+1,|S|]$ matches $T[q-m+1,q]$ is below $1/z$.
  Otherwise, $|S| \ge m$ and the result $y$ of the algorithm satisfies $y \le \pi \le y/(1-\eps)$.
\end{lemma}
\begin{proof}
  Let us denote $\pi'=x_{q-m+1} \cdot \ldots \cdot x_q$.
  If the \SWP algorithm returns ``No'', then $\pi' < (1-\eps)/z$.
  Then, indeed, either $|S| < m$ or $\pi=\pi'<(1-\eps)/z$.

  Assume that the \SWP algorithm returns a number~$y$.
  Then $y \le \pi' \le y/(1-\eps)$.
  If $y<(1-\eps)/z$, then $\pi'<1/z$.
  Again this means that either $|S| < m$ or $\pi=\pi'<1/z$.
  
  Finally, consider the case that $y \ge (1-\eps)/z$.
  Since $\pi' \ge (1-\eps)/z \ge 1/(2z)$, we have $|S| \ge m$.
  This concludes that $\pi=\pi'$ and $y \le \pi \le y/(1-\eps)$, as required.
\end{proof}

\subsubsection{Finding a maximal matching suffix that ends with $P$} We give two different methods for this task. The first method is based on a $k$-\Mis algorithm. Recall that if $P$ matches $T$ at some alignment $q$ with probability at least $1/z$, then $\heavy T[q-m+1,q]$ is a $\log z$-mismatch occurrence of~$P$ (see Observation~\ref{obs:logz}). We use the $k$-\Mis algorithm with $k = \log z$ for $P$ and $\heavy{T}$ to find all such alignments. When we identify a $\log z$-mismatch occurrence of $P$ in $\heavy{T}$, we find a stream that corresponds to a maximal matching suffix that ends with $P$, if any (using the indexing trie). Suppose we have found a maximal matching suffix $S$ that ends with $P$. We then use the algorithm of Lemma~\ref{lm:SWP} to compute a $(1-\eps)$-approximation of the product of the last $m$ probabilities in the stream associated with $S$. By Lemma~\ref{lem:SWP_str}, we can output it as an answer. 

By Observation~\ref{obs:Amir}, the number of maximal matching suffixes of $T[1,q]$ never exceeds $2z$. Therefore, the total number of streams is $\Oh(z)$. Updating the streams, including removing the duplicates, takes $\Oh(z \log z)$ space and time per position. For each of the streams we run the \SWP algorithm, which takes $\Oh(z \log_{1/(1-\eps)} z)$ space and time per position (Lemma~\ref{lm:SWP}). For $k = \log z$, the $k$-\Mis algorithm takes $\S_{\log z}$ space and $\T_{\log z}$ time per arriving symbol.  Finally, to find the right stream we need just $\Oh(\log z)$ additional time per position. In total, this is $\Oh(z\log_{1/(1-\eps)} z) + \S_{\log z}$ space and $\Oh(z \log_{1/(1-\eps)} z) + \T_{\log z}$ time per position. Our algorithm can output an incorrect answer only when the $k$-\Mis algorithm errs, which happens with probability $\PP_{\log z}$. We arrive at the second claim of Theorem~\ref{th:WPM-kmismatch}:

\begin{proposition}
If only the text is weighted, there is a $(1-\eps)$-approximation streaming algorithm that solves \WPM in $\Oh(z \log_{1/(1-\eps)} z)+ \S_{\log z}$ space and $\Oh(z \log_{1/(1-\eps)} z) + \T_{\log z}$ time per arrival. At each arrival, the algorithm can err with probability~$z \cdot \PP_{\log z}$. 
\end{proposition}

The second method is based on a straightforward application of the streaming \EPM algorithm of Breslauer and Galil~\cite{DBLP:journals/talg/BreslauerG14} for the pattern $P$ and each of the maximal matching suffixes streams. The streaming \EPM algorithm of Breslauer and Galil~\cite{DBLP:journals/talg/BreslauerG14} uses $\Oh(\log m)$ space and takes $\Oh(1)$ time per arrival. Once we have found a maximal matching suffix that ends with $P$, we apply the algorithm of Lemma~\ref{lm:SWP} to compute a $(1-\eps)$-approximation of the product of the last $m$ probabilities in the stream associated with $S$ and output it as an answer. In total, the second approach requires $\Oh(z \log_{1/(1-\eps)} z + z \log m)$ space and $\Oh(z \log_{1/(1-\eps)} z)$ time per position. The error probability of the \EPM algorithm and, therefore, of our approach is $1/\poly(n)$ per position. We arrive at the second claim of Theorem~\ref{th:WPM-dictionary}:

\begin{proposition}
If only the text is weighted, there is a $(1-\eps)$-approximation streaming algorithm that solves \WPM in $\Oh(z (\log_{1/(1-\eps)} z +\log m))$ space and $\Oh(z \log_{1/(1-\eps)} z)$ time per arrival. At each arrival, the algorithm can err with probability~$1/\poly(n)$. 
\end{proposition}

\subsection{Case 3~--- both the text and the pattern are weighted}	
\label{sec:WPWT}
We conclude by showing two solutions for the case when both the pattern and the text are weighted. Our solutions combine the ideas of the two previous sections. Recall that at each alignment the algorithm must output ``Yes'' if there is a regular string that matches the text and the pattern with probability above $1/z$. The algorithm may also output ``Yes" if there is a string that matches the pattern with probability at least $1/z$ and the text with probability between $(1-\eps)/z$ and $1/z$. Otherwise it outputs ``No''. 

In the algorithm we maintain a set of at most $2z$ streams for the maximal matching suffixes of the text $T$. For each of the streams we run the \SWP algorithm (Lemma~\ref{lm:SWP}). Let us recall that the set $D$ contains at most $z$ regular strings that match $P$ with probability $\ge 1/z$. There is a regular string that matches the pattern and the text with probability $\ge 1/z$ if and only if the following two conditions hold: 

\begin{enumerate}[(i)]
\item One of the maximal matching suffixes ends with a string in $D$;
\item The product of the last $m$ probabilities in the suffix's stream is $\ge 1/z$.
\end{enumerate}

We check the condition (ii) approximately using the \SWP algorithm. For the condition (i), we again discuss two methods. 

The first method uses a $k$-\Mis algorithm. If $P$ matches $T$ at some alignment $q$, then $\heavy T[q-m+1,q]$ must be a $\log z$-occurrence of one of the strings we generated for the pattern. We use the $k$-\Mis algorithm with $k = \log z$ for each of the generated strings and $\heavy{T}$ to find all such occurrences. At each alignment $q$ we consider $\log z$-occurrences of the generated strings and build a trie of their indices, where the indices are defined as in Section~\ref{sec:SPWT}. If a maximal matching suffix ends with one of these strings, its index must be in the trie. We can perform a search for each index in the trie in $\Oh(z \log z)$ time in total. The time required to update the suffix streams is $\Oh(z \log z)$ as there are $\Oh(z)$ of them~\cite{amir_weighted_property_matching_j}. For each of the streams we run the \SWP algorithm, which takes $\Oh(z \log_{1/(1-\eps)} z)$ space and time. For $k = \log z$ the $k$-\Mis algorithm takes $\S_{\log z}$ space and $\T_{\log z}$ time per symbol. Building the trie of indices takes $\Oh(z \log z)$ time. Searching for the indices in the trie takes $\Oh(z \log z)$ time. In total, this is $\Oh(z \log_{1/(1-\eps)} z) + z \cdot \S_{\log z}$ space and $\Oh(z \log_{1/(1-\eps)} z) + z \cdot \T_{\log z}$ time. The algorithm can output an incorrect answer only when one of the $k$-\Mis algorithms errs, which happens with probability $z/\poly(m) = 1/\poly(m)$. We arrive at the following proposition. It is not included in Theorem~\ref{th:WPM-kmismatch} since the complexities resulting for the currently best values of $\S_{\log z}$ and $\T_{\log z}$ are worse than if the second method is used.

\begin{proposition}
When both the pattern and the text are weighted, there is a $(1-\eps)$-approximation streaming algorithm that solves the \WPM problem in $\Oh(z\log_{1/(1-\eps)} z) + z \S_{\log z}$ space and $\Oh(z\log_{1/(1-\eps)} z) + z \T_{\log z}$ time per arrival. At each arrival, the algorithm can err with probability~$z \cdot \PP_{\log z}$. 
\end{proposition}

The second method is to run the \DM algorithm for multiple streams~\cite{DBLP:conf/icalp/GolanKP18}. This algorithm takes a dictionary and a set of streaming texts as an input. When a new symbol of a text arrives, the algorithm must determine whether the current text ends with a string from the dictionary. We run this algorithm for the dictionary $D$ and each of the maximal matching suffixes streams. The algorithm uses $\Oh(z \log m)$ shared memory, $\Oh(\log m \log z)$ space per stream, and $\Oh(\log m)$ time per each arriving character of a stream. In our case the total space is thus $\Oh(z \log m \log z)$ and, since the maximal matching suffix streams are handled in a dynamic way, the algorithm takes $\Oh(z \log m \log z)$ time per each arriving character of the text. The error probability is $1/\poly(n)$. We therefore arrive at the third claim of Theorem~\ref{th:WPM-dictionary}:

\begin{proposition}
When both the pattern and the text are weighted, there is a $(1-\eps)$-approximation streaming algorithm that solves the \WPM problem in $\Oh(z(\log_{1/(1-\eps)} z + \log z \log m))$ space and $\Oh(z (\log_{1/(1-\eps)} z + \log z\log m))$ time per arrival. At each arrival, the algorithm can err with probability~$1/\poly(n)$. 
\end{proposition}

\subsection{Space lower bound~--- Proof of Proposition~\ref{prp:WPT-lb}}\label{subsec:prop}
\label{sec:WPT-lb}
The Yao minimax principle~\cite{minimax} implies that the expected space complexity of the optimal deterministic
algorithm for an arbitrarily chosen input distribution $\pi$ is a lower bound on the space complexity of the optimal Monte Carlo randomized algorithm. Therefore, it suffices to show that for some input distribution $\pi$ any deterministic $(1-\eps)$-approximate streaming algorithm for the \WPM problem requires $\Omega(z)$ space on average (the lower bound for the exact algorithm follows). 

\paragraph{Case 1: Only pattern is weighted}
Let us consider the following communication problem. Alice is given a weighted pattern $P$ of length $m$ drawn from some distribution and the prefix $T[1,m]$ of the regular text $T$ of length $2m-1$. Bob is given the suffix $T[m+1,2m-1]$. Alice and Bob are given a threshold $1/z$ and Bob must distinguish all alignments of $P$ and $T$ where the match probability is at most $(1-\eps)/z$ and at least $1/z$. The communication complexity of the problem is defined to be the expectation of the minimum number of bits that Alice must send to Bob so that he can achieve his mission. 

The lower bound on the communication complexity of this problem is a lower bound on the space complexity of any streaming algorithm for the \WPM by a standard reduction. Namely, Alice can run the algorithm on the pattern and her half of the text and then send the memory of the algorithm to Bob. The space complexity of the algorithm will be equal to the communication complexity of the problem.

We select a string $S$ of length $m$ over an alphabet $\{0,1\}$ uniformly at random and build a weighted pattern over an alphabet $\{\mathtt{A},\mathtt{B},\mathtt{C}\}$. For $1 \le i \le m$, if $S[i] = 0$, then $P[i] = \mathtt{A}$ with probability $1-\frac1m$, $\mathtt{B}$ with probability $\frac2{3m}$, and $\mathtt{C}$ with probability $\frac1{3m}$. If $S[i] = 1$, then $P[i] = \mathtt{A}$ with probability $1-\frac1m$, $\mathtt{B}$ with probability $\frac1{3m}$, and $\mathtt{C}$ with probability $\frac2{3m}$. The text is defined to be $\mathtt{A}^{m-1} \mathtt{B} \mathtt{A}^{m-1}$ and $1/z = \frac2{3m} (1-\frac1m)^{m-1} = \Theta(\frac1m)$. We choose $\eps$ to be an arbitrary constant in $[0,\frac12]$. If Bob can distinguish all alignments of $P$ and $T$ where the match probability is at most $(1-\eps)/z$ or at least $1/z$, he can also restore $S$. Indeed, if the match probability between $P$ and $T[i,i+m-1]$ is equal to $\frac2{3m}(1-\frac1m)^{m-1} = 1/z$, then $S[m+1-i] = 0$, and if the match probability between $P$ and $T[i,i+m-1]$ is equal to $\frac1{3m}(1-\frac1m)^{m-1} \le (1-\eps)/z$, then $S[m+1-i] = 1$. Therefore, from information theoretic ideas, the communication complexity of the problem must be $\Omega(m) = \Omega(z)$.  

\paragraph{Case 2: Only text is weighted}
For the case when only the text is weighted, we use the same reduction but define $P$ and $T$ differently. Let $S$ be a random string in $\{0,1\}^m$ as above. $P$ is defined to be $\mathtt{B} \mathtt{A}^{m-1}$ and $T$ is defined over $\{\mathtt{A},\mathtt{B},\mathtt{C}\}$ as follows. For $1 \le i \le m$, if $S[i] = 0$, then $T[i] = \mathtt{A}$ with probability $1-\frac1m$, $T[i] = \mathtt{B}$ with probability $\frac2{3m}$, and $T[i] = \mathtt{C}$ with probability $\frac1{3m}$. If $S[i] = 1$, then $T[i] = \mathtt{A}$ with probability $1-\frac1m$, $T[i] = \mathtt{B}$ with probability $\frac1{3m}$, and $T[i] = \mathtt{C}$ with probability $\frac2{3m}$. For $m+1 \le i \le 2m-1$, $T[i] = \mathtt{A}$ with probability $1-\frac1m$ and $T[i] = \mathtt{B}$ with probability $\frac1m$. Again, if Bob can distinguish all alignments of $P$ and $T$ where the match probability is at most $(1-\eps)/z$ or at least $1/z$, he can also restore $S$. Therefore, any streaming algorithm for this variant must use $\Omega(z)$ space as well.

\paragraph{Case 3: Both the text and the pattern are weighted}
As a corollary we immediately obtain that any exact or $(1-\eps)$-approximate streaming algorithm for the weighted pattern, weighted text case must use $\Omega(z)$ space. 

\begin{remark}
We gave the proof for $z = \Theta(m)$. However, we can modify the examples so that it holds for $z \le m$: it suffices to prepend the pattern and the text with $\mathtt{A}^k$, for sufficiently large $k$, which for a weighted sequence means adding $k$ positions with probability of $\mathtt{A}$ equal to 1.
\end{remark}

\section{Conclusions}
In this work we present the first efficient solutions to the problems of $k$-\MwDR and \WPM in the streaming model.
In parallel to our work, a yet more efficient algorithm for $k$-\MwDR was developed~\cite{DBLP:conf/soda/CliffordKP19}
that near-matches time and space lower bounds.
We provide lower bounds for streaming solutions to \WPM that show that the space complexity of our solutions is nearly optimal.
It is an interesting open problem if the two variants of \WPM where the text is weighted can be solved exactly using
$\Oh(z^{1-\eps})\cdot \log^{\Oh(1)}m$ time per arrival, for any $\eps>0$.

\paragraph{Acknowledgements}
This work was supported by the ``Algorithms for text processing with errors and uncertainties''
project carried out within the HOMING programme of the Foundation for Polish
Science co-financed by the European Union under the European Regional Development Fund.

\bibliographystyle{plainurl}
\bibliography{main}
\end{document}